\renewcommand{\phi}{\varphi}
\renewcommand{\theta}{\vartheta}
\renewcommand{\epsilon}{\varepsilon}
\newcommand{\R}{\mathbb{R}}
\newcommand{\Z}{\mathbb{Z}}
\newcommand{\N}{\mathbb{N}}
\newcommand{\CC}{\mathcal{C}}
\newcommand{\PP}{\mathcal{P}}
\DeclareMathOperator{\dist}{dist}
\DeclareMathOperator{\graph}{graph}
\newcommand{\abs}[1]{\left\lvert#1\right\rvert}                  
\newcommand{\norm}[1]{\left\lVert#1\right\rVert}    
\newcommand*{\dd}{\mathop{}\!\mathrm{d}}
\newtheorem{theorem}{Theorem}
\newtheorem{lemma}[theorem]{Lemma}
\newtheorem{prop}[theorem]{Proposition}
\newtheorem{corol}[theorem]{Corollary}
\theoremstyle{definition}
\newtheorem{defin}[theorem]{Definition}
\newtheorem{example}[]{Example}
\numberwithin{equation}{section}
\numberwithin{theorem}{section}
\numberwithin{example}{section}
\title{Limit cycles for dynamic crawling locomotors with periodic prescribed shape}
\date{\emph{Preprint version: 11 October 2022}}
\author{Paolo Gidoni\footnote{Institute of Information Theory and Automation of the Czech Academy of Sciences, Pod vodárenskou věží 4, CZ-182 00 Prague 8, Czechia; \emph{email:} \texttt{gidoni@utia.cas.cz}}, \
Alessandro Margheri\footnote{Centro de Matemática, Aplicações Fundamentais e Investigação Operacional, Departamento de Matemática, Faculdade de Ciências, Universidade de Lisboa, Campo Grande, Edifício C6, piso 2,  1749-016  Lisboa, Portugal; \emph{email:} \texttt{amargheri@fc.ul.pt}} \
and Carlota Rebelo\footnote{Centro de Matemática Computacional e Estocástica, Departamento de Matemática, Faculdade de Ciências, Universidade de Lisboa,  Campo Grande, Edificio C6, piso 2, 1749-016 Lisboa,  Portugal; \emph{email:} \texttt{mcgoncalves@fc.ul.pt}}}
\begin{document}
\maketitle
\begin{abstract}
We study the asymptotic behaviour of a family of dynamic models of crawling locomotion, with the aim of characterizing a gait as a limit property.
The locomotors, which might have a discrete or continuous body, move on a line with a periodic prescribed shape change, and might possibly be subject to external forcing (e.g., crawling on a slope). We discuss how their behaviour is affected by different types of friction forces, including also set-valued ones such as dry friction.
We show that, under mild natural assumptions, the dynamics always converge to a relative periodic solution.  The asymptotic average velocity of the crawler yet might still depend on its initial state, so we provide additional assumption for its uniqueness. In particular, we show that the asymptotic average velocity is unique both for strictly monotone friction forces, and also for dry friction, provided in the latter case that the actuation is sufficiently smooth (for discrete models) or that the friction coefficients are always nonzero (for continuous models). We present several examples and counterexamples illustrating the necessity of our assumptions.

\end{abstract}

\noindent {\small
\textbf{Mathematics Subject Classification (2010):} 70K42 (Primary);  34A60, 34D45 (Secondary).\\
\textbf{Keywords:} crawling locomotion, gait, relative-periodic solution,  dry friction, dissipative system,  limit cycle. }

\section{Introduction}
A periodic pattern of shape changes is the keystone in the description of most biological and robotic locomotion  strategies.
Common examples are the flapping of fins in fishes or wings in birds, a peristaltic wave in a crawler, as well as the rotations of a ship's propeller. Periodicity brings a clear design advantage, since a long-term and complex task, such as going from A to B, can be decomposed as the iteration of a much simpler and brief input.
Indeed, not only a periodic input can be easily implemented in a robotic device, but also several biological organisms are known to employ basic mechanisms called \emph{central pattern generators} to produce cyclic shape-change patterns, without recurring to any movement-related sensory feedback \cite{Ijs08}.

To fully describe a locomotion strategy, identifying what is usually called a \emph{gait}, a specific periodic shape-changing pattern must be associated with a corresponding movement of the locomotor in a given environment.
More abstractly, a gait can be identified as a \emph{relative-periodic} evolution of the system \cite{KeMu,FaPaZo}. 

Relative-periodicity is based on the decomposition of the configuration space of the locomotor into the product of a \emph{shape space} and a \emph{position space}. 
The shape space describes, as predictable, the shape of the body of the locomotor and is where a periodic evolution is expected.
In the models discussed in our paper, the shape space will be an Euclidean space in discrete models and a Sobolev  space in continuous ones; but, for instance, in the presence of a rotating body part a manifold, such as $\R^n\times \mathcal S^1$, would a  suitable choice. 

The position space describes the location and orientation of the locomotor and, usually, is a Lie group. To each gait, we would like to  associate an element $\gamma$ of the position space, called \emph{geometric phase} whose action on the group describes the movement of the locomotor. Given an initial position $y_0$, a single iteration of the gait will displace the locomotor to $\gamma y_0$.

Unfortunately, while relative-periodicity is an extremely useful structure to study locomotion,  it is also a sort of ideal behaviour that can be consistently observed only in a limited set of models.
Examples where periodic inputs always produce relative-periodic evolutions are swimming at low-Reynolds number \cite{LaugaBook} and some special models of wheeled locomotion \cite{KeMu} and of crawling \cite{Rehor,Ago, DeSTat12}.

Often, instead, a relative-periodic behaviour might be expected to emerge as an asymptotic behaviour of the system. The Reader might get an intuitive description of this phenomenon by considering a degenerate example of crawling: a passive object lying on a surface, subject to friction but without any actuation. 
It is easy to identify its associated relative-periodic behaviour: a constant shape and a stationary position, so that $\gamma$ is the identity of the group. However, this behaviour is instantly reached only if the initial condition is also stationary.
An analogous phenomenon is produced by an elastic body with an initial deformation different from the rest configuration.
In such cases, the stationary asymptotic behaviour can be showed by noticing that the mechanical energy of the system  is a  Lyapunov function, decreasing in time. 
The situation, however, becomes much  more challenging if we consider a true locomotion model, since, due to the work produced by the actuation, the energy of the system is no longer decreasing. 

The aim of this paper is to rigorously investigate such an asymptotic behaviour for some general families of models of dynamic crawling locomotion with prescribed shape. We emphasize that our work is not limited to proving the convergence of each solution of the system to a relative periodic behaviour. For practical applications, it is also necessary to show that the (asymptotic) shape change and geometric phase do not depend on the initial conditions. As we will show, counterexamples are possible and the uniqueness of the limit behaviour requires stronger assumptions than convergence alone.

This kind of investigation is pivotal in the design, operation and optimization of robotic devices. As we mentioned, the possibility to rely  just on a small toolbox of periodic patterns, without any dedicated sensor or feedback mechanism, reduces the complexity of the device, with advantages for manufacturing, cost and miniaturization. This is however possible only if the behaviour of the locomotor is not affected in a relevant way by uncontrolled factors, such as the state of the locomotor when a new gait is applied, or a temporary external perturbation. The fact that a locomotion strategy do not involve a full control of the state of the locomotor can be interpreted as a basic form of \emph{morphological computation} \cite{MuHo}, meaning that the ability to adapt to the external conditions is partially delegated to structural properties of the robot, reducing the complexity of the actuation.

 A rigorous proof of the well-posedness and uniqueness of the asymptotic behaviour of the system is not only relevant \emph{per se}, but can  also be seen as a preliminary step to gait optimization.
Indeed, since gaits can be properly defined only as an asymptotic property of a periodic input, gait optimization must also be evaluated in the long-time limit \cite{GirJean}, so that optimality is considered only among limit-cycles. Example of this limit-cycle optimization have been discussed, for instance, for the Chaplygin sleigh \cite{FedTal20,PoFeTa} and for a model of quasistatic crawling \cite{ColGid}.

Whereas asymptotic stabilization is usually observed numerically or experimentally for a specific choice of the actuation (e.g. \cite{WagLau}), in this paper we undertake a more theoretical approach, providing rigorous results on the existence and structure of a global attractor for the dynamics.
Such an analytical approach allows in general to explore the effects (or lack thereof) of the different elements in a locomotion model: the actuation pattern, the geometry of the locomotor, the rheology of the interaction with the substrate, the relevance of inertial effects and the possible elasticity of the body.
In this paper our focus will be on the effect of the rheology in a dynamic framework for general actuation strategies.
Thus, we will restrict ourselves to rectilinear models of crawlers -- although highlighting the qualitative differences between discrete and continuous models -- and neglect elastic deformations, so that the shape of the crawler is directly controlled by the actuation. 

A prescribed shape allows to  reduce the dynamics to the position space $\R$. In particular, an asymptotically relative-periodic behaviour corresponds to a limit cycle for the velocity $v=\dot{\bar x}$ of the position of the locomotor. Accordingly, the geometric phase $\gamma$ identifies the asymptotic limit of the Poincaré time-map, with its value describing the \emph{asymptotic average velocity} $\gamma/T$ of the locomotor.

For this reason, we start in Section \ref{sec:abstr} by discussing some general results for a special class of scalar time-periodic differential inclusions $\dot v\in G(t,v)$. This framework allows to deal also with set-valued friction forces such as dry friction. For  the class considered we  prove that the dynamics is asymptotically periodic and that the periodic limit of a solution lies on a global attractor. Strengthening in two alternative ways the   assumptions on $G$,  we prove that such attractor  is a unique  limit cycle.

In Section \ref{sec:discr}  we apply these general  results to  discrete models of crawling locomotion, considering various friction laws. We show that, under mild dissipativity assumptions, the system will always converge to a relative-periodic behavior, which however might depend on the initial state. The uniqueness of the asymptotic velocity is obtained either for strictly monotone friction forces, extending the results in \cite{FigKny}, or for dry friction if the actuation is sufficiently smooth in time (continuous friction coefficients and a $\CC^1 $ shape change). Several examples and counterexamples are included, illustrating the sharpness of our assumptions.

In Section \ref{sec:cont} we repeat the same analysis for continuous models of crawlers, obtaining analogous results. The  only difference is in the case of dry friction, for which uniqueness of the limit cycle does not require any additional time-regularity. We remark that, both in the discrete and in the continuous case, dry friction provides only the (weak) monotonicity of $G$ in $v$, that is not by itself sufficient for uniqueness: its proof relies specifically on the intrinsic structure of our locomotion models.

Finally, in Section  \ref{sec:disc}  we discuss the results obtained in the paper in the context of the existing literature, together with possible future developments.

\section{Theoretical results for first order differential equations and inclusions} \label{sec:abstr}

\subsection{Structural assumptions}
Since we plan to describe the locomotion of crawlers under  general friction forces, including also dry friction, our framework should accomodate also  set-valued maps. Hence, we denote by $\PP(E)$ the power set of a set $E$. Moreover, with a slight abuse of notation, given $B\in \PP(\R)$ and $a\in \R$, we identify $B+a\in\PP(\R)$ as the set of the elements $b+a$ with $b\in B$.
Let us also recall some monotonicity properties for set-valued maps $F\colon \R \to \PP(\R)$.

\begin{defin}
We say that the set-valued map $F\colon \R \to \PP(\R)$ is \emph{monotone increasing} (resp. \emph{decreasing})  if 
\begin{equation*}
(y_2 - y_1)(u_2-u_1)\geq 0 \qquad \text{for every $(u_1,u_2)\in\R^2, u_1\neq u_2, y_1\in F(u_1), y_2\in F(u_2)$}
\end{equation*}
(resp. if $(y_2 - y_1)(u_2-u_1)\leq 0 $ on the same domain).

We say that a monotone  map is \emph{strictly monotone}   if the corresponding inequality is always strict. \end{defin}

\begin{defin}
We say that a set-valued monotone map $F\colon \R \to \PP(\R)$   is \emph{maximal monotone} \cite{Brezis}  if it is maximal among the set of monotone set-valued maps, with the respect to the relation of graph inclusion, i.e. if there is no monotone map $\widetilde F\colon \R \to \PP(\R)$ such that $\graph F \subsetneqq \graph \widetilde F$.
\end{defin}
Notice that, given a convex function $V\colon \R\to \R$ then its subdifferential  $\partial V$   is a maximal monotone decreasing map, with  strict monotonicity corresponding to a strictly convex $V$.  

Set-valued friction forces, however, also bring a favourable structure to the problem, providing, for instance, existence and forward-in-time uniqueness of solution.
For this reason, we will focus on a special class of differential inclusions $\dot u\in G(t,u)$ satisfying the following assumptions on $G$, which  will be referred  to in the rest of the paper as \textit{structural assumptions}: 

\begin{enumerate}[label=\textup{(S\arabic*)} ]  
\item  \label{cond:S1}\textit{The set-valued map $G\colon\R\times \R\to \PP(\R)$ is of the following form}
\begin{equation}\label{eq:Gform}
G(t,u)=-\mathcal{A}(t,u)+p(t,u)
\end{equation}
where
\begin{itemize}
\item $\mathcal{A}\colon\R\times \R\to \PP(\R)\setminus\emptyset$ is a set-valued map, locally bounded by a measurable function, $T$-periodic in $t$, maximal monotone increasing  in $u$ for every $t$ and measurable in $t$ for every $u$. 
	\item \textit{ $p\colon \R\times\R\to \R$ is a single-valued Carathéodory function, $T$-periodic in $t$, locally Lipschitz continuous in the variable $u$ uniformly in $t$.}
\end{itemize}
	\item \label{cond:S2}  \textit{There exist  a positive constant $R>0$ and two  $T$-periodic measurable functions $\ell_d^\pm(t)\colon\R\to\R$ such that $\int_0^T\ell_d^\pm(s)\dd s<0$ and }
	\begin{align*}
		y&\geq-\ell_d^-(t) &&\textit{for almost every $t\in \R, u\in(-\infty, -R],\,y\in G(t,u)$,}\\[2mm]
				y&\leq\ell_d^+(t) &&\textit{for almost every $t\in \R, u\in[R,+\infty),\,y\in G(t,u)$.}
	\end{align*}
\end{enumerate}

Notice that  in \ref{cond:S1} we require $G$ to be nowhere empty-valued. Moreover, for every $t$,  $G(t,\cdot)$   is convex- and  compact-valued, upper semi-continuous, single-valued  outside  a set of null measure and locally Lipschitz continuous in $u$.  

\smallskip

We plan to study the differential problem
\begin{equation}\label{eq:general}
	\dot u\in G(t,u)\,.
\end{equation}
We will call \emph{solution} of \eqref{eq:general} an absolutely continuous function such that \eqref{eq:general} is satisfied at almost every $t$ on the domain of the solution.
In particular, we observe that we have local existence \cite[Corollary~5.2, pag.~59]{Deim} and right uniqueness (using the same argument as \cite[Theorem~1, pag.~106]{Filip}) of solution for the Cauchy problems associated with \eqref{eq:general}; see also \cite{VilNgu}.


\subsection{Existence of periodic solutions and attractors}

We now investigate the existence and qualitative properties of periodic solutions and global attractors for the dynamics of  \eqref{eq:general}, assuming that $G$ satisfies the structural assumptions  \ref{cond:S1} and \ref{cond:S2}. 

The existence of periodic solution and attractors has been studied by several authors in the generalized framework of  Hilbert spaces, both in the cases when the maximal monotone term is autonomous \cite{Hir,Fri, AkaSte} and time-periodic \cite{Ken,Ota,PapRad}.
For our purposes, we restrict ourselves to the scalar case. If, on one hand, this simplifies the problem, on the other hand it allows us to relax the assumptions on the system and to obtain additional qualitative properties of the set of periodic solutions. For instance, the scalar setting together with the structure of the crawling problem will allow us to obtain uniqueness of the periodic solution also for a monotone $G$ (Theorem~\ref{th:Cdry}), which is not to be expected in the general case.
Some proofs in this subsection  follow  classical lines of reasoning, used also for scalar periodic ordinary differential equations (see, for example, \cite{Ort}).   However,  unlike the ODEs case, it is worth to stress that the lack of  uniqueness of solutions in the past  for  \eqref{eq:general}   can lead to asymptotically periodic motions which  attain the periodic regime in finite time (as it happens in Examples \ref{ex:dry} and \ref{ex:strib}).

Firstly, we observe that  our structural assumptions guarantee  global forward existence and boundedness of the solutions.  Indeed, let us set
\begin{align*}
    v_-:=-R-\norm{\ell_d^-}_{L^1(0,T)}\,, && v_+:=R+\norm{\ell_d^+}_{L^1(0,T)}\,.
\end{align*}
From the bounds in \ref{cond:S2} we easily deduce the following statement:
\begin{prop}
	Let $G$ satisfy the structural assumptions. Then the solution of the Cauchy problem $v(t_0)=v_0$ is bounded between $\min\{v_0-\norm{\ell_d^-}_{L^1(0,T)},v_-\}$ and $\max\{v_0+\norm{\ell_d^+}_{L^1(0,T)},v_+\}$ for every $t\geq t_0$. In particular, this implies global forward existence of solutions.
\end{prop}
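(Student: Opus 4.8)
The statement is an \emph{a priori} estimate, so the plan is to prove that the claimed inequalities hold on the whole forward maximal interval of existence of the solution $v$, and then to deduce global forward existence by a routine continuation argument. Throughout, since \eqref{eq:general} is a differential inclusion whose solutions are merely absolutely continuous, the pointwise bounds of \ref{cond:S2} must be used in integrated form.

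Set $M:=\max\{v_0+\norm{\ell_d^+}_{L^1(0,T)},v_+\}$, so that $M\ge v_0$ and $M\ge v_+\ge R$. First I would record the elementary bound
\begin{equation*}
\int_s^t\ell_d^+(r)\dd r\ \le\ \norm{\ell_d^+}_{L^1(0,T)}\qquad\text{for all }s\le t
\end{equation*}
(and its analogue for $\ell_d^-$): writing $t-s=kT+\rho$ with $k\in\N$ and $\rho\in[0,T)$, the contribution of the $k$ full periods equals $k\int_0^T\ell_d^+(r)\dd r\le0$ by \ref{cond:S2}, while the integral over the leftover interval of length $\rho<T$ is at most $\norm{\ell_d^+}_{L^1(0,T)}$ in absolute value, since $\abs{\ell_d^+}$ is again $T$-periodic.

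Next, fix $\tau$ in the forward maximal interval of existence. If $v(\tau)\le R$ then $v(\tau)\le R\le M$ and there is nothing to prove. Otherwise, let $\sigma:=\sup\{t\in[t_0,\tau]:v(t)\le R\}$, with the convention $\sigma:=t_0$ if this (closed) set is empty; in either case $\sigma<\tau$ and $v\ge R$ on $[\sigma,\tau]$, while $v(\sigma)\le\max\{R,v_0\}$ (indeed $v(\sigma)=R$ by continuity when the set is nonempty, and $v(\sigma)=v_0>R$ when it is empty). Since \ref{cond:S2} gives $\dot v(t)\le\ell_d^+(t)$ for a.e.\ $t$ with $v(t)\ge R$, integrating over $[\sigma,\tau]$ yields
\begin{equation*}
v(\tau)=v(\sigma)+\int_\sigma^\tau\dot v(r)\dd r\ \le\ v(\sigma)+\norm{\ell_d^+}_{L^1(0,T)}\ \le\ \max\{R,v_0\}+\norm{\ell_d^+}_{L^1(0,T)}\ =\ M.
\end{equation*}
The lower bound $v(\tau)\ge\min\{v_0-\norm{\ell_d^-}_{L^1(0,T)},v_-\}$ follows by the symmetric argument, using the last time before $\tau$ at which $v\ge-R$ together with the inequality $\dot v(t)\ge-\ell_d^-(t)$ valid wherever $v(t)\le-R$.

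Having established these bounds on the maximal interval $[t_0,t^*)$, global forward existence is standard: $v$ stays in a fixed compact set, so by the local bounds in \ref{cond:S1} $\dot v$ is dominated near $t^*$ by a locally integrable function; hence $v$ extends absolutely continuously up to $t^*$ and, if $t^*<+\infty$, can be prolonged via the cited local existence result, contradicting maximality. Thus $t^*=+\infty$. There is no real obstacle here; the only mildly delicate points are the periodicity bookkeeping in the first step and the careful choice of the last exit time $\sigma$ from $(-\infty,R]$ (respectively from $[-R,+\infty)$), which is what makes the integrated differential inequality close up to exactly the stated constants.
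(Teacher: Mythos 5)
Your proof is correct and is exactly the argument the paper leaves implicit (the paper only says the bounds are "easily deduced" from \ref{cond:S2}): the last-exit-time from $(-\infty,R]$ (resp.\ $[-R,+\infty)$), the integrated inequality $\int_s^t\ell_d^+\le\norm{\ell_d^+}_{L^1(0,T)}$ obtained from $T$-periodicity and the negative period integral, and the standard continuation argument for global forward existence. No gaps worth noting.
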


Since we have global forward existence and uniqueness of solution, we can define    the Poincaré  map, $\Phi_T\colon \R\to \R$,   that associates to every initial datum $x_0$ at time $t=0$ the value $\Phi_T(x_0)=x(T;x_0)$ of the solution at time $t=T$ of the corresponding Cauchy problem for \eqref{eq:general}. Existence and forward uniqueness of solutions also imply that this map is monotone, while \ref{cond:S1} implies continuous dependence of solutions.   However, unlike for the ODEs case, $\Phi_T$ may fail to be injective.   Nevertheless,  we can  translate many properties of the dynamics    of \eqref{eq:general}  into the discrete dynamics associated to $\Phi_T$,  given by the difference equation $v_{n+1}=\Phi_T(v_n)=\Phi_T^n(v_0)$.  For example, a fixed point $v_*$   of $\Phi_T$  corresponds  to a  $T$ periodic solution $v(t,v_*)$  of  \eqref{eq:general}. 

The next result establishes the asymptotic behaviour of  $\Phi_T.$

\begin{theorem} \label{th:Abs_gen}
	Suppose that $G$ satisfies the structural assumptions.  Then the interval
	\begin{equation}\label{eq:defK}
		K:=[\alpha,\beta]=\bigcap_{i\in \N}\Phi^i_T([v_-,v_+])
	\end{equation}
	is a global attractor for the discrete dynamics induced by $\Phi_T$.  Moreover,  $\Phi_T(\alpha)=\alpha$,  $\Phi_T(\beta)=\beta$  and for any $v_0\in\R,\,   \lim_{i\to +\infty}\Phi^i_T(v_0)=v^*= \Phi_T(v^*)\in [\alpha,\beta]$ 
\end{theorem}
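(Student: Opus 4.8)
The plan is to use what has already been recorded about $\Phi_T$ — it is continuous (by \ref{cond:S1}) and monotone increasing (by forward existence and uniqueness) — together with the boundedness furnished by the preceding Proposition, and to extract from \ref{cond:S2} a precise control of $\Phi_T$ near and outside the interval $[v_-,v_+]$. The crucial point is that orbits of a monotone increasing map of $\R$ are automatically monotone sequences, so once boundedness and the right behaviour at the ends of $[v_-,v_+]$ are established, convergence essentially follows; the identification of the limit set with $K$ is then a nested-interval argument.

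The first step is to show the trapping/repelling inequalities
$\Phi_T(v)>v$ for $v<v_-$, $\Phi_T(v)<v$ for $v>v_+$, $\Phi_T(v_-)\ge v_-$ and $\Phi_T(v_+)\le v_+$, which by monotonicity of $\Phi_T$ also give $\Phi_T([v_-,v_+])\subseteq[v_-,v_+]$. Take $v_0\ge v_+\ge R$ and the solution $v(\cdot\,;v_0)$. If $v(t)\ge R$ on all of $[0,T]$ then $\dot v\le\ell_d^+$ a.e., hence $v(T)-v_0\le\int_0^T\ell_d^+<0$. Otherwise $v$ dips below $R$; if $v(T)\le R$ we are done since $R\le v_+<v_0$ (or $=v_+$ when $v_0=v_+$), while if $v(T)>R$ we let $t_0$ be the last instant in $[0,T]$ at which $v=R$, so that $v\ge R$ on $(t_0,T]$, $\dot v\le\ell_d^+$ a.e. there, and $v(T)=R+\int_{t_0}^T\dot v\le R+\norm{\ell_d^+}_{L^1(0,T)}=v_+$. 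In every case $v(T)<v_0$ for $v_0>v_+$ and $v(T)\le v_+$ for $v_0=v_+$; the inequalities for $v_0\le v_-$ follow symmetrically from the lower bound in \ref{cond:S2}.

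Next, since $\Phi_T$ is continuous and monotone increasing, $K_i:=\Phi_T^i([v_-,v_+])=[\Phi_T^i(v_-),\Phi_T^i(v_+)]$ is a compact interval, and by the forward invariance just proved the sequence $(K_i)$ is nested, $K_{i+1}\subseteq K_i$. Hence $\Phi_T^i(v_-)$ increases to some $\alpha$ and $\Phi_T^i(v_+)$ decreases to some $\beta$, with $\alpha\le\beta$, and $K=\bigcap_i K_i=[\alpha,\beta]$ is nonempty and compact. Continuity of $\Phi_T$ gives $\Phi_T(\alpha)=\lim_i\Phi_T^{i+1}(v_-)=\alpha$ and likewise $\Phi_T(\beta)=\beta$, so $\Phi_T(K)=K$ and the endpoints produce $T$-periodic solutions $v(\cdot\,;\alpha)$, $v(\cdot\,;\beta)$.

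Finally, for arbitrary $v_0\in\R$ the orbit $x_n:=\Phi_T^i(v_0)$ is monotone (because $\Phi_T$ is increasing: if $x_1\ge x_0$ then $x_{n+1}\ge x_n$ for all $n$, and if $x_1\le x_0$ the orbit is nonincreasing) and bounded by the preceding Proposition, so $x_n\to v^*$ with $\Phi_T(v^*)=v^*$. If $v_0\in[v_-,v_+]$ then $x_n\in K_n$ for all $n$, so $v^*\in\bigcap_n K_n=K$. If $v_0>v_+$, the inequality $\Phi_T(v)<v$ for $v>v_+$ forces the orbit to be nonincreasing; either it stays above $v_+$, and then the fixed point $v^*$ must equal $v_+=\beta$, or some $x_n\le v_+$, and since $v_0\ge R$ gives $x_n\ge v_-$ by the Proposition, the orbit lands in $[v_-,v_+]$ and we conclude as before; the case $v_0<v_-$ is symmetric. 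Thus every orbit converges to a fixed point of $\Phi_T$ lying in $[\alpha,\beta]$, which is exactly the statement that $K$ is a global attractor for $\Phi_T$. The only genuinely delicate point I anticipate is the one-sided "boundary-layer" estimate of the second step: one must accommodate trajectories that cross $\pm R$ back and forth within a single period, which is why the bound is organised around the last crossing time rather than via a monotone trapping argument.
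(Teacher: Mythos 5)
Your proof is correct and follows the same backbone as the paper's argument: continuity and monotonicity of $\Phi_T$, the limits $\alpha,\beta$ of the monotone sequences $\Phi_T^i(v_\pm)$, continuity to identify the fixed points, and monotonicity plus boundedness of an arbitrary orbit to get convergence to a fixed point in $K$. The genuine difference is how orbits starting outside $[v_-,v_+]$ are treated. The paper shows by contradiction that such an orbit enters the forward-invariant interval in \emph{finite time}: if $v(t_0+kT)>v_+$ for every $k$, then $v(t)>R$ for all $t\ge t_0$, so $\dot v\le\ell_d^+$ and $v(t_0+kT)\le v_0+k\int_0^T\ell_d^+(s)\dd s\to-\infty$, a multi-period integral estimate. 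You instead prove the one-period ``last crossing'' estimate yielding $\Phi_T(v)<v$ for $v>v_+$ (and $\Phi_T(v_+)\le v_+$), and dispose of exterior orbits through the dichotomy: either the orbit falls into $[v_-,v_+]$, or it decreases to a fixed point which must be $v_+$, forcing $\beta=v_+$. Both routes are sound; the paper's gives the stronger information of finite-time absorption (and, with a slightly sharper version of your boundary estimate, the strict inequalities $\Phi_T(v_-)>v_-$, $\Phi_T(v_+)<v_+$ it asserts, which make your residual sub-case $v^*=v_+$ vacuous), while yours works purely at the level of the map $\Phi_T$ and supplies in detail the endpoint estimate that the paper only asserts from \ref{cond:S2}. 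The one point you leave implicit is why $v^*=v_+$ entails $\beta=v_+$: this is immediate, since $v_+$ would then be a fixed point and so $\Phi_T^i(v_+)=v_+$ for every $i$; stating it explicitly would close the argument completely.
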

\begin{proof}
	First of all, we observe that, since we have forward uniqueness of solution and our dynamics is in dimension one, then the map $\Phi_T$ is monotone increasing, else two orbits would cross each other. Since by \ref{cond:S2} we have $\Phi_T(v_-)> v_-$ and  $\Phi_T(v_+)< v_+$, using the monotonicity of $\Phi_T$ we deduce that the sequences $\Phi_T^i(v_-)$ and $\Phi_T^i(v_+)$ are, respectively, increasing and decreasing. Thus there exist the limits 
	\begin{align} \label{eq:alphabetadef}
		\alpha:=\lim_{i\to+\infty} \Phi_T^i(v_-)\,, &&\beta:=\lim_{i\to+\infty} \Phi_T^i(v_+)\,.
	\end{align}
	Notice that  the set $[v_-,v_+]$ is forward invariant for the map $\Phi_T$ and that, by monotonicity, $v_-<\alpha\leq \beta <v_+$. Therefore, the right equality in \eqref{eq:defK} is true. By monotonicity we also deduce that $K$ is an attractor for all the orbits starting in $[v_-,v_+]$. Hence it remains to show that every other orbit enters the forward invariant set $[v_-,v_+]$. Let us consider the solution of a general Cauchy problem $v(t_0)=v_0$. We discuss the case $v_0>v_+$;  if $v_0<v_-$ the argument is analogous.
	Suppose by contradiction that $v(t_0+kT)>v_+$ for every $k\in\N$. Hence, by construction, $v(t)>R$ for every $t\in[t_0,+\infty)$. By \ref{cond:S2} we have  $\dot v(t)\leq  \ell_d^+(t)$ 
	for every  $t\geq t_0$. Then, for every $k\in \N$ it holds
	\begin{equation*}
		v(t_0+kT)=v_0+\int\displaylimits_{t_0}^{t_0+kT} \dot v(s)\dd s\leq v_0+k\Lambda \qquad\text{where}\qquad \Lambda= \int\displaylimits_{0}^T\ell_d^+(s)\dd s<0 \,.
	\end{equation*}
		Taking any integer $k>(v_0-v_+)/\abs{\Lambda},$  we get   $v(t_0+kT)<v_+$, and  we arrive to a  contradiction. Hence each orbit reaches the forward invariant set $[v_-,v_+]$ in a finite time (depending on the orbit), therefore $K$ is a global attractor. 
	
	Since $\Phi_T$ is continuous, then $\Phi_T(\alpha)=\alpha$ and $\Phi_T(\beta)=\beta$ follow from \eqref{eq:alphabetadef}. Moreover, since all the orbits are monotone and belong eventually to $K, $ by the continuity of $\Phi_T$  we get immediately  the last part of the statement.   
\end{proof}

 Since  a  point $v_0$  whose orbit   $\{\Phi_T^n(v_0)\}_{n\in\N}$  converges to a fixed point of $\Phi_T$ corresponds to an asymptotically periodic solution of  \eqref{eq:general},   we get immediately the following:

 \begin{theorem} \label{th:asymptper}
Suppose that $G$ satisfies the structural assumptions. Then, for any   solution $v$ of  \eqref{eq:general} there exists a $T$-periodic solution $v^*(t)$  of  \eqref{eq:general} such that $v^*(0)\in K$   and $\lim_{t\to +\infty}  (v(t)-v^*(t))=0$.
\end{theorem}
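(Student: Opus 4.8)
The plan is to read the statement off Theorem~\ref{th:Abs_gen} together with the continuous dependence of solutions on initial data. Let $v$ be a solution of \eqref{eq:general}; by the Proposition on global forward existence we may assume $v$ is defined on an interval $[t_0,+\infty)$. Pick $k_0\in\N$ with $k_0T\ge t_0$. Because $G$ is $T$-periodic in $t$, the evolution from $(k_0+j)T$ to $(k_0+j+1)T$ is governed by the same law as the evolution from $0$ to $T$; hence, combining this with right uniqueness, $v\big((k_0+j)T\big)=\Phi_T^{j}\big(v(k_0T)\big)$ for every $j\in\N$. By Theorem~\ref{th:Abs_gen} this sequence converges to a point $v^*$ with $\Phi_T(v^*)=v^*$ and $v^*\in[\alpha,\beta]=K$.

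Next I would attach to the fixed point $v^*$ its $T$-periodic solution. Let $v^*(\cdot)$ denote the solution of \eqref{eq:general}, unique by right uniqueness, with $v^*(0)=v^*$, considered on $[0,T]$; since $\Phi_T(v^*)=v^*$ we have $v^*(T)=v^*$. Using again the $T$-periodicity of $G$, one translates this solution by integer multiples of $T$ and glues the pieces at the nodes $jT$, where the one-sided limits coincide and absolute continuity is preserved; this produces a genuine $T$-periodic solution of \eqref{eq:general}, still denoted $v^*(\cdot)$, with $v^*(0)=v^*\in K$.

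It remains to prove $v(t)-v^*(t)\to0$. Fix $t\ge k_0T$ and write $t=(k_0+j)T+s$ with $j\in\N$, $s\in[0,T]$. By $T$-periodicity and right uniqueness, $v(t)$ equals the value at time $s$ of the solution issued at time $0$ from $v\big((k_0+j)T\big)$, while $v^*(t)=v^*(s)$ equals the value at time $s$ of the solution issued at time $0$ from $v^*$. Since $v\big((k_0+j)T\big)\to v^*$, the continuous dependence of solutions on the initial datum guaranteed by \ref{cond:S1}, which is uniform on the compact interval $[0,T]$, gives
\[
\sup_{s\in[0,T]}\big|\,v\big((k_0+j)T+s\big)-v^*(s)\,\big|\;\longrightarrow\;0\qquad\text{as }j\to+\infty,
\]
and therefore $v(t)-v^*(t)\to0$ as $t\to+\infty$.

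The result is thus essentially a corollary of Theorem~\ref{th:Abs_gen}: the only step needing care is the last one, namely upgrading the convergence of the iterates $\Phi_T^{j}\big(v(k_0T)\big)$ along the sampling times $(k_0+j)T$ to uniform convergence on each period. For ordinary differential equations this would follow from routine Gr\"onwall estimates, but here, because of the set-valued term $\mathcal{A}$ and the lack of backward uniqueness, I expect the main (if minor) obstacle to be invoking the correct continuous-dependence statement for differential inclusions in the class fixed by \ref{cond:S1}, which I would quote from the references already cited for local existence and right uniqueness. I would also note, consistently with the remarks preceding the theorem, that it may happen that $v(t)=v^*(t)$ for all sufficiently large $t$, in which case the conclusion is trivially satisfied.
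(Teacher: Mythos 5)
Your proposal is correct and takes essentially the same route as the paper, which states the theorem as an immediate consequence of Theorem~\ref{th:Abs_gen}: the Poincar\'e iterates of the solution converge to a fixed point of $\Phi_T$, that fixed point generates a $T$-periodic solution, and the continuous dependence on initial data (which the paper asserts follows from \ref{cond:S1}, and which here reduces to a Gr\"onwall-type estimate since the maximal monotone term only helps) upgrades convergence at the sampling times $(k_0+j)T$ to convergence of $v(t)-v^*(t)$ as $t\to+\infty$. The paper leaves these steps implicit, while you spell them out; no gap.
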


In particular, we get: 
\begin{corol}\label{cor:Abs_unique}
		Suppose that $G$ satisfies the structural assumptions.
  If \eqref{eq:general} admits only one $T$-periodic solution $v^*$, then $v^*$  is a global attractor for the dynamics in the following sense: given $v_0$,  the solution $v(t;v_0)$ of \eqref{eq:general} such that  $v(t_0;v_0)=v_0$ satisfies
	\begin{equation}\label{eq:globattr}
		\lim_{t\to +\infty} (v(t; v_0)-v^*(t))=0 \,.
	\end{equation}
\end{corol}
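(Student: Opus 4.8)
The plan is to obtain this directly from Theorem~\ref{th:asymptper}, which already carries all the analytic weight; the corollary is then just a pointwise uniqueness argument layered on top of it.

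First I would fix $v_0$ and $t_0$ and consider the solution $v(\cdot\,;v_0)$ of \eqref{eq:general} with $v(t_0;v_0)=v_0$. By the Proposition preceding Theorem~\ref{th:Abs_gen} this solution is defined for all $t\geq t_0$, so Theorem~\ref{th:asymptper} applies to it and furnishes a $T$-periodic solution $w$ of \eqref{eq:general} with $w(0)\in K$ and $\lim_{t\to+\infty}\big(v(t;v_0)-w(t)\big)=0$. Next I would invoke the standing hypothesis: \eqref{eq:general} possesses exactly one $T$-periodic solution, namely $v^*$. Since $w$ is a $T$-periodic solution, we must have $w\equiv v^*$, and substituting this into the limit above yields $\lim_{t\to+\infty}\big(v(t;v_0)-v^*(t)\big)=0$, which is precisely \eqref{eq:globattr}. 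As $v_0$ (and $t_0$) were arbitrary, this proves the statement.

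I do not expect a genuine obstacle. The only point worth a word is that "uniqueness of the $T$-periodic solution" here means uniqueness of the function itself, with no ambiguity up to time-translation, so the periodic solution produced by Theorem~\ref{th:asymptper} is literally $v^*$ and the identification $w\equiv v^*$ is immediate. All the substantive work — global forward existence, monotonicity and continuity of $\Phi_T$, convergence of every orbit to a fixed point of $\Phi_T$, and the resulting asymptotic periodicity of every solution — has already been carried out in Theorem~\ref{th:Abs_gen} and Theorem~\ref{th:asymptper}.
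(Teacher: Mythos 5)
Your proposal is correct and coincides with the paper's own argument: the corollary is stated as an immediate consequence of Theorem~\ref{th:asymptper}, obtained exactly by identifying the periodic solution furnished by that theorem with $v^*$ via the uniqueness hypothesis. No gap to report.
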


A monotonicity assumption on $G$ allows us to better characterize the attactor $K$.
\begin{theorem} \label{th:Abs_weakmon}
 Suppose that $G$ satisfies the structural assumptions and  is   monotone decreasing in $v$. Then, in addition to the conclusions of Theorem \ref{th:Abs_gen}, the set $K$ is the (non-empty) set of all the fixed points of $\Phi_T$, each corresponding to a $T$-periodic solution of \eqref{eq:general}. Furthermore, denoting with $v_\alpha(t)$ the $T$-periodic solution with $v_\alpha(t_0)=\alpha$, all $T$-periodic solutions of \eqref{eq:general} are given by $v_\alpha(t)+(\gamma-\alpha)$ with $\gamma\in [\alpha,\beta]$.
\end{theorem}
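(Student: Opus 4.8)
The plan is to exploit the extra rigidity that a monotone decreasing $G$ imposes on the flow: any two solutions can only approach each other, never drift apart. Concretely, the first step is to show that if $u_1,u_2$ are solutions of \eqref{eq:general} on a common interval, then $t\mapsto (u_1(t)-u_2(t))^2$ is non-increasing. This map is absolutely continuous, and at almost every $t$ its derivative equals $2\,(u_1(t)-u_2(t))\,(\dot u_1(t)-\dot u_2(t))$: where $u_1(t)\neq u_2(t)$, the monotone decreasing assumption on $G$ together with $\dot u_i(t)\in G(t,u_i(t))$ gives $(u_1(t)-u_2(t))(\dot u_1(t)-\dot u_2(t))\leq 0$, while where $u_1(t)=u_2(t)$ this product vanishes. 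Hence the derivative is $\leq 0$ a.e.\ and $(u_1-u_2)^2$ is non-increasing; in particular $t\mapsto\abs{u_1(t)-u_2(t)}$ is non-increasing and $\Phi_T$ is non-expansive, $\abs{\Phi_T(x)-\Phi_T(y)}\leq\abs{x-y}$.

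The second step identifies the fixed point set. By Theorem \ref{th:Abs_gen}, $\alpha$ and $\beta$ are fixed points of $\Phi_T$ and $K=[\alpha,\beta]$. For an arbitrary $\gamma\in[\alpha,\beta]$, non-expansiveness gives $\abs{\Phi_T(\gamma)-\alpha}\leq\gamma-\alpha$ and $\abs{\Phi_T(\gamma)-\beta}\leq\beta-\gamma$, hence $\abs{\Phi_T(\gamma)-\alpha}+\abs{\Phi_T(\gamma)-\beta}\leq\beta-\alpha$. Since the reverse inequality always holds by the triangle inequality, with equality precisely when $\Phi_T(\gamma)\in[\alpha,\beta]$, we get both $\Phi_T(\gamma)\in[\alpha,\beta]$ and equality in the two estimates above; in particular $\Phi_T(\gamma)-\alpha=\gamma-\alpha$, so $\Phi_T(\gamma)=\gamma$. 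Conversely, any fixed point $v_*$ satisfies $v_*=\lim_i\Phi_T^i(v_*)\in[\alpha,\beta]$ by the last part of Theorem \ref{th:Abs_gen}, so $K$ is exactly the (non-empty) set of fixed points, each corresponding to a $T$-periodic solution as noted before Theorem \ref{th:Abs_gen}.

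For the explicit description, let $\gamma\in[\alpha,\beta]$ and let $v_\gamma$ be the $T$-periodic solution with $v_\gamma(t_0)=\gamma$, which exists by the previous step. Put $w:=v_\gamma-v_\alpha$: it is $T$-periodic, continuous, and by the first step $w^2$ is non-increasing, hence constant on $[t_0,t_0+T]$ and, by periodicity, everywhere; by continuity $w$ itself is constant, so $w\equiv w(t_0)=\gamma-\alpha$, i.e.\ $v_\gamma=v_\alpha+(\gamma-\alpha)$. Since any $T$-periodic solution $v^*$ has $v^*(t_0)$ equal to a fixed point of $\Phi_T$, hence lying in $[\alpha,\beta]$, this exhausts all $T$-periodic solutions.

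The only genuinely delicate point is the first step: one must handle the set-valued right-hand side and justify the a.e.\ differentiation of $(u_1-u_2)^2$ carefully, especially across the (possibly positive-measure) set where $u_1=u_2$ — the estimate there rests on the elementary fact that a Lipschitz function is differentiable a.e.\ with derivative vanishing a.e.\ on any level set. Everything after that is one-dimensional bookkeeping: monotonicity of $\Phi_T$, the triangle inequality, and "a non-increasing periodic function is constant". A minor point to address is the relation between the Poincaré map based at $t=0$ (used in Theorem \ref{th:Abs_gen}) and the initial time $t_0$ appearing in the statement; since the $\mathcal{A}$-part is $T$-periodic, the map based at $t_0$ is conjugate to $\Phi_T$ and shares all the properties used above, so the argument goes through verbatim.
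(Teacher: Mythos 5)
Your proposal is correct and takes essentially the same route as the paper: the heart of both arguments is the non-expansiveness estimate for pairs of solutions coming from monotonicity (the paper's inequality \eqref{eq:solcompare}, which you obtain by differentiating $(u_1-u_2)^2$ instead of using the ordering provided by forward uniqueness). The only difference is bookkeeping in the second half — the paper shows directly that $v_\gamma-v_\alpha$ is constant by squeezing it between the constant $v_\beta-v_\alpha$ and the nonincreasing $v_\beta-v_\gamma$, whereas you first get $\Phi_T(\gamma)=\gamma$ from non-expansiveness and the equality case of the triangle inequality and then use periodicity of $(v_\gamma-v_\alpha)^2$; both are valid and interchangeable.
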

\begin{proof}
	Let us consider two solutions $u(t),v(t)$ of \eqref{eq:general}. We observe that, by monotonicity,  for almost every $t$ we have
	\begin{equation}\label{eq:monoton}
		(\dot u(t)  -\dot v(t) )(u(t) -v(t) ) \leq 0 \,.
	\end{equation}
Assume in addition that  $u(t^*)=u^*$ and $v(t^*)=v^*$, with $u^*\geq v^*$.
	Then, for every $t>t^*$ it holds
	\begin{equation}\label{eq:solcompare}
		\abs{u(t)-v(t)}=u(t)-v(t)=u(t^*)-v(t^*)+\int_{t^*}^{t}(\dot u(s)-\dot v(s))\dd s\leq u(t^*)-v(t^*)=\abs{u(t^*)-v(t^*)}
	\end{equation}
where the integral term is non-positive by $\eqref{eq:monoton}$, since  $u-v\geq 0$ due to the forward uniqueness of solutions.
If  $\alpha=\beta$ the proof is completed by noticing that, by Theorem \ref{th:Abs_gen}, all the fixed points of $\Phi_T$ must be in $K$. Suppose now that $\alpha\neq \beta$ and denote with $v_\gamma$ the solution of the Cauchy problem $v(t_0)=\gamma$ for some $\gamma \in K$. We show that $v_\gamma-v_\alpha$ is constant, and therefore $v_\gamma$ is $T$-periodic. Let  $v_\beta$ be the solution of the Cauchy problem $v(t_0)=\beta$.  Being  $v_\beta$ and $v_\alpha$  $T$-periodic, we deduce that $v_\beta=v_\alpha+\beta-\alpha$, since $v_\beta-v_\alpha$ is nonincreasing by \eqref{eq:solcompare} and therefore must be constant. By \eqref{eq:solcompare} we also obtain  that $v_\gamma-v_\alpha$ and $v_\beta -v_\gamma$ are both nonincreasing. Since $v_\beta-v_\gamma=v_\alpha-v_\gamma+\beta-\alpha$ we deduce that $v_\gamma-v_\alpha$ is both nondecreasing and nonincreasing, and therefore it is constant and equal to $\gamma-\alpha$.
\end{proof}

Example~\ref{ex:strib} will illustrate a case where $G$ satisfies the assumptions of Theorem~\ref{th:Abs_gen} but is not monotone. In this case the set $K$ is a nontrivial interval but the dynamics has exactly three periodic solutions.

We now investigate sufficient conditions for the uniqueness of the periodic solution. The first one is strict monotonicity.

\begin{theorem}\label{th:Abs_strmon}
		Suppose that $G$ satisfies the structural assumptions  and  is strictly monotone decreasing in $v$.  Then  \eqref{eq:general} admits exactly one $T$-periodic solution $v^*$, which is a global attractor for the dynamics in the sense of \eqref{eq:globattr}.
\end{theorem}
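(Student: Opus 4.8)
The plan is to combine the existence of a $T$-periodic solution, already supplied by Theorem~\ref{th:Abs_gen}, with a uniqueness argument that upgrades the contraction estimate \eqref{eq:solcompare} to a strict one, and then to invoke Corollary~\ref{cor:Abs_unique}. I would proceed in three steps.

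\emph{Existence.} By Theorem~\ref{th:Abs_gen} the left endpoint $\alpha$ of $K$ is a fixed point of $\Phi_T$, so the solution $v_\alpha$ of the Cauchy problem $v(t_0)=\alpha$ is a $T$-periodic solution of \eqref{eq:general}; hence the set of $T$-periodic solutions is non-empty and it only remains to prove it is a singleton.

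\emph{Uniqueness.} Suppose $u$ and $v$ are two $T$-periodic solutions with $u\not\equiv v$. First I claim $u(t)\ne v(t)$ for every $t$: if $u(t_0)=v(t_0)$ for some $t_0$, forward uniqueness of solutions would force $u\equiv v$ on $[t_0,+\infty)$, and then $T$-periodicity would give $u\equiv v$ on $\R$, a contradiction. So $u-v$ is continuous and nowhere zero, hence of constant sign; assume $u(t)>v(t)$ for all $t$. Then, for almost every $t$, strict monotone decrease of $G(t,\cdot)$ applied to $\dot u(t)\in G(t,u(t))$ and $\dot v(t)\in G(t,v(t))$, with $u(t)\ne v(t)$, yields $(\dot u(t)-\dot v(t))(u(t)-v(t))<0$; since $u(t)-v(t)>0$ this gives $\dot u(t)-\dot v(t)<0$ for almost every $t$. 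Thus $t\mapsto u(t)-v(t)$ is absolutely continuous and strictly decreasing on $\R$, which is incompatible with its being $T$-periodic. This contradiction shows that \eqref{eq:general} admits exactly one $T$-periodic solution $v^*$. (Equivalently, one may use Theorem~\ref{th:Abs_weakmon}: strict monotonicity implies weak monotonicity, so if $\alpha<\beta$ then $v_\alpha$ and $v_\alpha+(\beta-\alpha)$ would be two $T$-periodic solutions whose difference is a nonzero constant, contradicting the strict decrease just derived; hence $\alpha=\beta$ and $K=\{\alpha\}$.)

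\emph{Global attractor.} Since \eqref{eq:general} now has a unique $T$-periodic solution, Corollary~\ref{cor:Abs_unique} immediately yields $\lim_{t\to+\infty}(v(t;v_0)-v^*(t))=0$ for every initial datum $v_0$, which is \eqref{eq:globattr}.

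There is no genuine analytical obstacle here; once Theorems~\ref{th:Abs_gen}--\ref{th:Abs_weakmon} and Corollary~\ref{cor:Abs_unique} are available the argument is short. The only point requiring care is the remark that two distinct solutions of \eqref{eq:general} can never meet — a consequence of \emph{forward} (not backward) uniqueness together with the scalar setting — since this is precisely what turns strict monotonicity into a pointwise strict sign for $\dot u-\dot v$, and hence into strict monotonicity of $u-v$ on all of $\R$; it is the interplay of this fact with the $T$-periodicity of $u-v$ that closes the argument.
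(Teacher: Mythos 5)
Your proof is correct and takes essentially the same route as the paper's: the paper invokes Theorem~\ref{th:Abs_weakmon} to conclude $\dot v_\beta=\dot v_\alpha$ almost everywhere and then lets strict monotonicity force $v_\alpha=v_\beta$, which is exactly the mechanism behind your direct argument that $u-v$ would be strictly decreasing, contradicting its $T$-periodicity. Your preliminary observation that two distinct $T$-periodic solutions can never meet (forward uniqueness plus periodicity) makes the main argument marginally more self-contained, and your parenthetical remark is precisely the paper's proof.
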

\begin{proof}
	First of all, let us notice that Theorem~\ref{th:Abs_weakmon} applies, so that, by Corollary~\ref{cor:Abs_unique} we only need to verify that $\beta=\alpha$. Let us denote by $v_\alpha$ and $v_\beta$ their corresponding solutions. By Theorem \ref{th:Abs_weakmon} we have $\dot v_\beta=\dot v_\alpha$ almost everywhere, which by strict monotonicity is possible if and only if $v_\alpha=v_\beta$.
\end{proof}

In general, however, uniqueness of the periodic solution might be achieved also when $G$ is only monotone, provided  that it satisfies some additional structural assumption. We present now a first result of this type in Theorem~\ref{th:Abs_monreg1}; a different structure  with set-valued maps will be considered in Theorem~\ref{th:Dmonot_reg}.

\begin{theorem} \label{th:Abs_monreg1} 
		Let $G\colon\R^2\to\R$  be a single-valued, continuous function, monotone decreasing in the second variable $u$, satisfying \ref{cond:S2} and, moreover, suppose that for every $t$ there exists an unique $u_t$ such that $G(t,u_t)=0$. Then  \eqref{eq:general} admits exactly one $T$-periodic solution $v^*$, which is a global attractor for the dynamics in the sense of \eqref{eq:globattr}.
\end{theorem}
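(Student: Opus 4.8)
The plan is to reduce the statement, exactly as in the proof of Theorem~\ref{th:Abs_strmon}, to showing that the attractor $K=[\alpha,\beta]$ of Theorem~\ref{th:Abs_gen} degenerates to a point. First I would check that the present hypotheses are a special instance of those of Theorem~\ref{th:Abs_weakmon}: a single-valued, jointly continuous $G$ that is monotone decreasing in $u$ (and, as throughout, $T$-periodic in $t$) satisfies the structural assumptions with $\mathcal A(t,u):=-G(t,u)$ and $p\equiv 0$, since a continuous single-valued monotone map on all of $\R$ is maximal monotone, and the local boundedness and measurability in $t$ required in \ref{cond:S1} are automatic from continuity. Hence Theorem~\ref{th:Abs_weakmon} and Corollary~\ref{cor:Abs_unique} apply, and it remains only to prove that $\alpha=\beta$.

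Arguing by contradiction, suppose $\alpha<\beta$ and let $v_\alpha$ be the $T$-periodic solution with $v_\alpha(0)=\alpha$. By Theorem~\ref{th:Abs_weakmon}, for every $\gamma\in[\alpha,\beta]$ the translate $v_\gamma:=v_\alpha+(\gamma-\alpha)$ is again a $T$-periodic solution, so $\dot v_\gamma=\dot v_\alpha$ and therefore
\[
G(t,v_\gamma(t))=\dot v_\gamma(t)=\dot v_\alpha(t)=G(t,v_\alpha(t))=:c(t)
\]
for almost every $t$, hence for every $t$ by continuity. As $\gamma$ ranges over $[\alpha,\beta]$, the point $v_\gamma(t)$ sweeps the whole interval $I_t:=[v_\alpha(t),\,v_\alpha(t)+(\beta-\alpha)]$, which has positive length $\beta-\alpha$; thus $G(t,\cdot)\equiv c(t)$ on $I_t$ for every $t$.

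The key observation is that $c$ is continuous and, by $T$-periodicity of $v_\alpha$, $\int_0^T c(t)\dd t=\int_0^T\dot v_\alpha(t)\dd t=0$. Consequently either $c\equiv 0$, or $c$ takes both signs and, by the intermediate value theorem, $c(t_0)=0$ for some $t_0$. In the first case $v_\alpha(t)$ and $v_\alpha(t)+(\beta-\alpha)$ would be two distinct zeros of $G(t,\cdot)$ for every $t$, contradicting the uniqueness of $u_t$; in the second case $G(t_0,\cdot)$ would vanish identically on the nondegenerate interval $I_{t_0}$, again contradicting the uniqueness of the zero $u_{t_0}$. Either way we reach a contradiction, so $\alpha=\beta$, the $T$-periodic solution is unique, and global attractivity in the sense of \eqref{eq:globattr} follows from Corollary~\ref{cor:Abs_unique}. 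I do not expect a serious obstacle here; the only point that needs care is that the uniqueness of the zero forbids $G(t,\cdot)$ from being constant on a nondegenerate interval \emph{only} when the constant value is $0$, which is precisely why the argument must route through the vanishing of $\int_0^T c$ forced by periodicity, rather than concluding pointwise in $t$.
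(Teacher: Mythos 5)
Your proposal is correct and follows essentially the same route as the paper: reduce via Theorem~\ref{th:Abs_weakmon} and Corollary~\ref{cor:Abs_unique} to showing $\alpha=\beta$, and then use that $T$-periodicity forces the continuous derivative $\dot v_\alpha$ (your $c(t)$, with $\int_0^T c(t)\dd t=0$) to vanish at some time, where the uniqueness of the zero $u_t$ of $G(t,\cdot)$ excludes two distinct translated periodic solutions. The paper phrases the same argument slightly more directly---$v_\alpha$ is $\CC^1$ and periodic, so $\dot v_\alpha(t^*)=\dot v_\beta(t^*)=0$ for some $t^*$, forcing $v_\alpha(t^*)=v_\beta(t^*)$ and hence $\alpha=\beta$ by forward uniqueness---while your detour through the full family of translates and the constancy of $G(t,\cdot)$ on $I_t$ is a harmless elaboration of the same idea.
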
	
\begin{proof}
	Noticing that the assumptions of Theorem~\ref{th:Abs_weakmon} are satisfied, let us denote by $v_\alpha$ and $v_\beta$ the solutions with initial values $\alpha$ and $\beta$. By Corollary~\ref{cor:Abs_unique} we only need to verify that $\beta=\alpha$. We observe that $v_\alpha$ is periodic and continuously differentiable, hence there exists $t^*\in[0,T]$ such that $\dot v_\alpha(t^*)=0$. By Theorem \ref{th:Abs_weakmon} we have $\dot v_\beta(t^*)=\dot v_\alpha(t^*)=0$, which by the assumptions of the Theorem is possible only if $ v_\beta(t^*)=v_\alpha(t^*)$, implying $\beta=\alpha$ since we have forward uniqueness of solution.
\end{proof}

\subsection{A lemma}
This technical result will be  a fundamental tool  to prove the existence of a global limit cycle in the case of dry friction (see Theorem \ref{th:Dmonot_reg}).

\begin{defin}\label{def:gamma}
	Given $n$ functions $\alpha_1,\dots \alpha_n\colon \R\to \R$, for every index $1\leq j\leq n$ we denote with $\Gamma^\alpha_j(t)$ the unique value such that
	\begin{itemize}
		\item $\Gamma^\alpha_j(t)\geq \alpha_i(t)$ for at least $j$ distinct indices $i\in\{1,\dots n\}$;
		\item $\Gamma^\alpha_j(t)\leq \alpha_k(t)$ for at least $n-j+1$ distinct indices $k\in\{1,\dots n\}$.
	\end{itemize}
	That is, $\Gamma^\alpha_j(t)$ is the $j$-th smallest value (counting multiplicity) at time $t$ of the collection of functions.
\end{defin}
\begin{lemma}\label{lemma:gamma}
	Let $\alpha_1(t),\dots \alpha_n(t)\colon\R\to \R$ be continuous functions, $T$-periodic, and such that
	\begin{equation*}
		\int_{0}^T\alpha_i(t)\dd t=0 \quad\text{for every $i$.} 
	\end{equation*}	
	Then the functions $\Gamma^\alpha_j(t)$ are continuous and $T$-periodic. Moreover, for every $\delta>0$ and index $j\in\{1,\dots n-1\}$ there exists a measurable set $U_{j,\delta}\subseteq[0,T)$ with positive measure such that $\Gamma^\alpha_{j+1}(t)-\Gamma^\alpha_j(t)<\delta$ for every $t\in U_{j,\delta}$.
\end{lemma}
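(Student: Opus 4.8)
The plan is to treat the two claims separately. For the regularity claim I would invoke the classical min--max representation of order statistics: directly from Definition~\ref{def:gamma},
\[
  \Gamma^\alpha_j(t)=\min_{\substack{S\subseteq\{1,\dots,n\}\\ |S|=j}}\ \max_{i\in S}\alpha_i(t).
\]
Indeed, a minimizing $S$ is any choice of indices realizing the $j$ smallest values, for which the inner maximum equals the $j$-th smallest value, while every other size-$j$ set $S$ must contain an index outside those $j$ smallest, so its inner maximum is at least the $j$-th smallest value. Since each $\alpha_i$ is continuous and $T$-periodic, every $\max_{i\in S}\alpha_i$ is continuous and $T$-periodic, and the finite minimum over the $\binom{n}{j}$ subsets inherits both properties; in particular each $\Gamma^\alpha_j$ is bounded and integrable on $[0,T]$.

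For the gap estimate I would argue by contradiction, taking as candidate the natural set $U_{j,\delta}:=\{t\in[0,T):\ \Gamma^\alpha_{j+1}(t)-\Gamma^\alpha_j(t)<\delta\}$, which is open in $[0,T)$ by continuity of the $\Gamma^\alpha$'s, hence has positive measure as soon as it is nonempty. Suppose $U_{j,\delta}$ has measure zero; being open it is then empty, so, extending by $T$-periodicity, $\Gamma^\alpha_{j+1}(t)\geq\Gamma^\alpha_j(t)+\delta$ for all $t\in\R$. Set $B(t):=\{i:\ \alpha_i(t)\leq\Gamma^\alpha_j(t)\}$. Since $\Gamma^\alpha_j(t)$ and $\Gamma^\alpha_{j+1}(t)$ are consecutive order statistics, no value $\alpha_i(t)$ lies strictly between them, so $B(t)$ has exactly $j$ elements and its complement equals $\{i:\ \alpha_i(t)\geq\Gamma^\alpha_{j+1}(t)\}$. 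The crucial observation is that $t\mapsto B(t)$ is locally constant: if $i\in B(t_0)$ then $\alpha_i(t_0)\leq\Gamma^\alpha_j(t_0)<\Gamma^\alpha_{j+1}(t_0)$, so by continuity $\alpha_i(t)<\Gamma^\alpha_{j+1}(t)$ near $t_0$, which forces $\alpha_i(t)\leq\Gamma^\alpha_j(t)$, i.e. $i\in B(t)$; symmetrically $i\notin B(t_0)$ yields $i\notin B(t)$ near $t_0$. Since $\R$ is connected, $B(t)\equiv B$ for a fixed $j$-element set $B$.

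To conclude, since $1\leq j\leq n-1$ both $B$ and its complement are nonempty; choose $i_0\in B$ and $i_1\notin B$. Then for every $t\in[0,T]$,
\[
  \alpha_{i_0}(t)\leq\Gamma^\alpha_j(t)\leq\Gamma^\alpha_{j+1}(t)-\delta<\Gamma^\alpha_{j+1}(t)\leq\alpha_{i_1}(t),
\]
so $\alpha_{i_1}-\alpha_{i_0}$ is a strictly positive continuous function on the compact interval $[0,T]$, whence $\int_0^T(\alpha_{i_1}-\alpha_{i_0})\dd t>0$; this contradicts $\int_0^T\alpha_{i_1}\dd t=\int_0^T\alpha_{i_0}\dd t=0$. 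Hence $U_{j,\delta}\neq\emptyset$, which proves the claim.

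The only genuinely delicate step is the local constancy of $B(\cdot)$: one must use that it is precisely the absence of collection values in the open interval $(\Gamma^\alpha_j(t),\Gamma^\alpha_{j+1}(t))$, combined with the continuity of the $\alpha_i$'s, that simultaneously pins down $|B(t)|=j$ and prevents any index from migrating between the ``bottom'' and ``top'' groups, so that $B$ is forced to be globally constant; everything else is routine. It is worth noting that the hypothesis $\int_0^T\alpha_i\dd t=0$ is used for \emph{each} $i$, not merely for the sum, and that it is indispensable, as seen by taking $\alpha_i\equiv c_i$ with distinct constants.
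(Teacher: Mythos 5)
Your proof is correct and follows essentially the same route as the paper's: assume by contradiction that the gap stays at least $\delta$, split the indices into a time-independent ``low'' group (with $\alpha_i\leq\Gamma^\alpha_j$) and ``high'' group (with $\alpha_i\geq\Gamma^\alpha_{j+1}$), and contradict the zero-mean hypothesis by comparing the integrals of one function from each group. The only difference is that you make explicit, via the local-constancy and connectedness argument for $B(t)$, why these index sets can be chosen independent of $t$ --- a point the paper's proof leaves implicit --- and you justify continuity of $\Gamma^\alpha_j$ through the min--max representation.
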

\begin{proof}
	Continuity and $T$-periodicity of the functions $\Gamma^\alpha_j$ follow from the corresponding property of the functions $\alpha_j$. To verify the last property, suppose by contradiction that there exist an index $k$ and a $\delta>0$ such that $\Gamma^\alpha_{k+1}(t)-\Gamma^\alpha_k(t)\geq\delta$ at almost every $t$. By the continuity of the $\Gamma^\alpha_j$ functions, we deduce that $\Gamma^\alpha_{k+1}(t)>\Gamma^\alpha_k(t)$ for every $t\in [0,T)$. Thus there exists two non-empty set of indexes $J_k^-$ and $J_k^+$ such that $\alpha_i\leq \Gamma^\alpha_{k}$ for every $i\in J_k^-$, while $\alpha_j\geq \Gamma^\alpha_{k+1}$ for every $j\in J_k^+$. Taking any $i\in J_k^-$ and $j\in J_k^+$ we get the contradiction
	\begin{equation*}
		0=\int_0^T\alpha_i(t)\dd t<\int_0^T\alpha_j(t)\dd t=0 \,.
	\end{equation*}
\end{proof}


\section{Discrete models of crawler} \label{sec:discr}
\begin{figure}
    \centering
    	\begin{tikzpicture}[line cap=round,line join=round,x=4mm,y=4mm, line width=1pt, scale=0.9]
		\clip(-1,-2) rectangle (31,7); 
		\draw [line width=1pt, fill=gray!40] (0,0.5)-- (3,0.5)--(3,3.5)-- (0,3.5)-- (0,0.5);
		\draw (3,2)-- (5.2,2.);
		\draw (6.8,2)-- (9.,2.);
		\draw (5.2,2)--(6.3,2);
		\draw (6.5,2)--(6.8,2);
		\draw (6.3,1.6)-- (5.5,1.6)--(5.5,2.4)--(6.3,2.4);
		\draw (5.7,1.8)-- (6.5,1.8)--(6.5,2.2)--(5.7,2.2);
		\draw [line width=1pt, fill=gray!40] (9,3.5)-- (9,0.5)-- (12,0.5)-- (12,3.5)-- (9,3.5);
		\draw (12,2)-- (14.2,2.);
		\draw (15.8,2)-- (18.,2.);
		\draw (14.2,2)--(15.3,2);
		\draw (15.5,2)--(15.8,2);
		\draw (15.3,1.6)-- (14.5,1.6)--(14.5,2.4)--(15.3,2.4);
		\draw (14.7,1.8)-- (15.5,1.8)--(15.5,2.2)--(14.7,2.2);
		\draw [line width=1pt, fill=gray!40] (18,3.5)-- (18,0.5)-- (21,0.5)-- (21,3.5)-- (18,3.5);
		\draw (21,2)-- (23.2,2.);
		\draw (24.8,2)-- (27.,2.);
		\draw (23.2,2)--(24.3,2);
		\draw (24.5,2)--(24.8,2);
		\draw (24.3,1.6)-- (23.5,1.6)--(23.5,2.4)--(24.3,2.4);
		\draw (23.7,1.8)-- (24.5,1.8)--(24.5,2.2)--(23.7,2.2);
		\draw [line width=1pt, fill=gray!40] (27,3.5)-- (27,0.5)-- (30,0.5)-- (30,3.5)-- (27,3.5);
		\draw [] (-2,0)-- (32,0);	
		\draw [] (0,0.5)-- (-0.2,0);
		\draw [] (0.5,0.5)-- (0.3,0);
		\draw [] (1,0.5)-- (0.8,0);
		\draw [] (1.5,0.5)-- (1.3,0);
		\draw [] (2,0.5)-- (1.8,0);
		\draw [] (2.5,0.5)-- (2.3,0);
		\draw [] (3,0.5)-- (2.8,0);
		\draw [] (9,0.5)-- (8.5,0);
		\draw [] (9.5,0.5)-- (9.,0);
		\draw [] (10,0.5)-- (9.5,0);
		\draw [] (10.5,0.5)-- (10.,0);
		\draw [] (11,0.5)-- (10.5,0);
		\draw [] (11.5,0.5)-- (11.,0);
		\draw [] (12,0.5)-- (11.5,0);
		\draw [] (18,0.5)-- (17.8,0);
		\draw [] (18.5,0.5)-- (18.3,0);
		\draw [] (19,0.5)-- (18.8,0);
		\draw [] (19.5,0.5)-- (19.3,0);
		\draw [] (20,0.5)-- (19.8,0);
		\draw [] (20.5,0.5)-- (20.3,0);
		\draw [] (21,0.5)-- (20.8,0);
		\draw [] (27,0.5)-- (26.6,0);
		\draw [] (27.5,0.5)-- (27.1,0);
		\draw [] (28,0.5)-- (27.6,0);
		\draw [] (28.5,0.5)-- (28.1,0);
		\draw [] (29,0.5)-- (28.6,0);
		\draw [] (29.5,0.5)-- (29.1,0);
		\draw [] (30,0.5)-- (29.6,0);
		\draw (1.5,2) node[] {$m_1$};
		\draw (10.5,2) node[] {$m_2$};
		\draw (19.5,2) node[] {$m_3$};
		\draw (28.5,2) node[] {$m_4$};
		\draw (1.5,4.5) node[] {$x_1(t)$};
		\draw (10.5,4.5) node[] {$x_2(t)$};
		\draw (19.5,4.5) node[] {$x_3(t)$};
		\draw (28.5,4.5) node[] {$x_4(t)$};
		\fill [pattern = north east lines] (-2,0) rectangle (32,-0.6);
	\end{tikzpicture}
    \caption{The discrete model of crawler studied in Section~\ref{sec:discr}.}
    \label{fig:discr}
\end{figure}
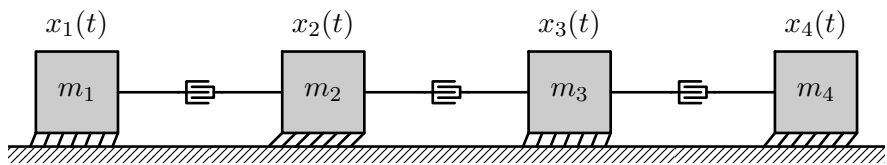
We consider a system composed of $n$ linked point masses $m_1,\dots m_n$ on a line, illustrated in Figure~\ref{fig:discr}. Similar models have been considered, for instance, in \cite{BSZZB17,BSZZ18,FigKny,GidRiv,Marvi,WagLau}, and in \cite{Gid18, ColGidVil} at the quasistatic regime. We assume that each mass is non-negative, but we require the total mass of the system $M:=\sum m_i$ to be positive. We assume that all masses are placed on a line, and denote with the vector $x(t)=(x_1(t),x_2(t),\dots,x_n(t))\in\R^{n}$ their position at any time $t$.  
The shape of the locomotor, namely the relative distances between the position of the masses, are assigned. This means that the vector $x$ can be decomposed in an unknown scalar value indicating the overall position of the crawler and a prescribed $(n-1)$-dimensional vector describing its shape. Several choices of the reference point are possible, e.g., the barycentre, the average position, the head, the tail; clearly this does not affect the qualitative behaviour of the system. 
In the following, we identify the position of the locomotor with its barycentre $\bar x(t)$ and denote with $z_i(t)$ the position of the $i$-th mass relative to the barycentre, namely
\begin{align*}
	\bar x(t)=\frac{1}{M}\sum_{i=1}^n m_ix_i(t) \,, && z_i(t):=x_i(t)-\bar x(t) \,.
\end{align*}

Notice that that the vector $z(t)\in\R^n$ is actually contained in the $(n-1)$-dimensional subspace defined by the constraint $\sum m_i z_i=0$. Moreover, assigning an input $z(t)$ is equivalent to controlling the (signed) distance between consecutive masses, since $x_i-x_j=z_i-z_j$.

In what follows, unless stated differently, we will always  assume that the following holds for the shape functions:
\begin{enumerate}[label=\textup{(D\arabic*)}]
\item \label{cond:Dzreg} the functions  $z_i(t)$ are $T$-periodic and Lipschitz continuous.
\end{enumerate}

Each mass is subject to a time-dependent friction force $F_i(t,\dot x_i)$. In concrete situation, time-dependence can be produced by various mechanisms: a change in the normal force on the contact surfaces (e.g.~due to an expansion while crawling in a tube) or in their properties  (e.g.~a change in the tilt angle of bristles or scales on the body of the crawler \cite{GidDeS18,Marvi}) or more complex phenomena (cf.~\cite{Rehor}).   We assume that each friction force $F_i$ is $T$-periodic in time; the possible types of friction force-velocity law will be discussed below. Notice that, in general, the functions $F_i$ may be set-valued, in order to account for discontinuous forces such as dry friction, for which a yield force has to be reached in order to slide. This implies that, in general, the dynamics of the system is described by a differential inclusion. 

Finally, we also allow an external force $B(t)$ acting on the system. Since we are in a one-dimensional setting and the shape of the locomotor is predetermined, the force might be considered as applied to the barycentre. We assume
\begin{enumerate}[label=\textup{(D\arabic*)},resume]
	\item \label{cond:DBreg} the function  $B(t)$ is $T$-periodic, bounded and measurable.
\end{enumerate}
We point out that locomotion is, \emph{per se}, driven by an internal actuation, in our case represented by shape change $z_i$, so in our examples we will focus on $B\equiv 0$. External forces may however be relevant to describe additional effects: for instance, crawling on a slope with angle $\theta$ corresponds to $B\equiv -Mg\sin\theta$, where $g$ denotes gravity.

Since the shape of the locomotor is prescribed, to study its evolution we just have to consider that of its barycenter $\bar x(t)$, that is determined by the sum of the external forces $B$ and $F_i$.
Denoting $v:=\dot{\bar x}$ and $w:=\dot z$, we can rewrite the dynamics of the barycenter as
\begin{equation}\label{eq:gdiscr}
	\dot v\in G(t,v):=\frac{1}{M}\left(B(t)+\sum_{i=0}^n  F_i(t,v+w_i(t))\right) \,.
\end{equation}

Notice that $G(t,v)$ is defined up to a set of zero measure (which corresponds to the points of non differentiability of the given shape functions $z_i$). However,  we can redefine arbitrarily $G$ on such set without changing the set of solutions.  Also, note that the  functions $x_i=\bar x+z_i$  are the sum of  $\bar x\in W^{2,1}([0,T],\R)$  and $z\in W^{1,\infty}([0,T],\R)$, so that, in general,
the evolution of the material points $x_i$ is less regular. This corresponds to the the exchange of impulsive internal forces at the times at which $z_i$ is not differentiable, so that the dynamics of a single material point can be expressed only in a distributional sense. 

We will discuss several classes of friction forces, providing increasingly stronger convergence properties of the dynamics. We look for limit cycles of Equation \eqref{eq:gdiscr}: indeed, a periodic solution $\bar v$ of \eqref{eq:gdiscr} corresponds to a relative-periodic evolution of the state $x$, with geometric phase $\gamma=\int_0^T\bar v(t)\dd t$.

\paragraph{Generic friction force (possibly non-monotone)}
We begin by discussing the situation of a generic friction force.

\begin{enumerate}[label=\textup{(D\arabic*)},resume]
	\item \label{cond:Dreg} For every index $i$, the function $F_i$ 
	is a set-valued map $T$-periodic in $t$ and locally bounded by a measurable function.
	Moreover, it can be decomposed as $F_i(t,u)=A_i(t,u)+\psi_i(t,u)$, where $A_i$ is maximal monotone decreasing  in $u$ for every $t$ and measurable in $t$ for every $u$, while $\psi_i$ is a Carathéodory function locally Lipschitz continuous in the variable $u$ uniformly in $t$.
	\item \label{cond:Dcoer} There exists $R>0$ and, for every index $i$, two 
	$T$-periodic measurable functions $\ell_i^\pm(t)\colon\R\to\R$ such that, for almost every $t\in \R$ 
	\begin{align*}
	F_i(t,u)&\geq -\ell_i^-(t) &&\text{for  every  $u\leq -R$,}\\
	F_i(t,u)&\leq \ell_i^+(t) &&\text{for  every  $u\geq R$,}
	\end{align*} 
with
	\begin{equation*} \int_{0}^T \left(B(s)+\sum_{i=1}^n\ell_i^+(s)\right)\dd s<0<\int_{0}^T \left(B(s)-\sum_{i=1}^n\ell_i^-(s)\right)\dd s \,.
	\end{equation*} 

\end{enumerate}
The regularity condition \ref{cond:Dreg} allows to include $F_i(t,\cdot)$  that are, in general, non-monotone, hence accounting for a Stribeck effect \cite{AdlGoe,OdeMar}. Notice, however, that the left limit of $F_i(t,\cdot)$ at discontinuity points is always greater than the right one, due to the monotonicity of $A_i$.

In concrete cases, we expect the forces $F_i$ to act always against the direction of motion. We would like, however, to include also situations in which one point-mass is not in contact with the surface during a certain phase of the period, for instance due to lifting or shrinking of the corresponding part of the body, resulting in a temporarily null $F_i$. On the other hand, this should be a constrained phenomenon, avoiding a perpetual frictionless sliding in one direction. This is, intuitively, the meaning of condition \ref{cond:Dcoer} for $B\equiv 0$, although, for mathematical generality, such property is required only  at sufficiently large speeds.
For $B\neq 0$, \ref{cond:Dcoer} also guarantees that the external load does not overcome friction forces, i.e., the locomotion component of the dynamics. For instance, in the case of crawling on a slope, it guarantees that an inactive crawler would not slide downward accelerating endlessly.

\begin{theorem}\label{th:Dgen}
	Let us consider a discrete model of crawler as above, satisfying \ref{cond:Dzreg}, \ref{cond:DBreg}, \ref{cond:Dreg} and \ref{cond:Dcoer}. 
 Then all the solutions $v$ of \eqref{eq:gdiscr} are asymptotically $T$-periodic. 
 Moreover, there exists two, possibly identical, $T$-periodic solutions $v_\alpha,v_\beta$ of the dynamics such that	every solution $v$ of \eqref{eq:gdiscr} satisfies
	\begin{equation}
	\lim_{t\to +\infty}\dist (v(t),[v_\alpha(t),v_\beta(t)])=0 \,.
	\end{equation}
\end{theorem}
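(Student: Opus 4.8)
The plan is to derive Theorem~\ref{th:Dgen} by verifying that the differential inclusion \eqref{eq:gdiscr} falls within the abstract framework of Section~\ref{sec:abstr}, and then invoking Theorems~\ref{th:Abs_gen}, \ref{th:asymptper} and~\ref{th:Abs_weakmon}. Concretely, the first step is to check that the right-hand side $G(t,v)=\frac{1}{M}\bigl(B(t)+\sum_{i=1}^n F_i(t,v+w_i(t))\bigr)$ satisfies the structural assumptions \ref{cond:S1} and \ref{cond:S2}. For \ref{cond:S1}, I would use the decomposition $F_i(t,u)=A_i(t,u)+\psi_i(t,u)$ provided by \ref{cond:Dreg} and write
\begin{equation*}
	G(t,v) = -\mathcal A(t,v) + p(t,v), \qquad
	\mathcal A(t,v):=-\frac{1}{M}\sum_{i=1}^n A_i(t,v+w_i(t)),\qquad
	p(t,v):=\frac{1}{M}\Bigl(B(t)+\sum_{i=1}^n \psi_i(t,v+w_i(t))\Bigr).
\end{equation*}
Here $\mathcal A$ is a finite sum of shifted copies of the maps $A_i$, which are maximal monotone decreasing in $u$; since $w_i(t)$ is a fixed (measurable, indeed bounded) translation in the $v$-variable, each $v\mapsto -A_i(t,v+w_i(t))$ is maximal monotone increasing, and a finite sum of maximal monotone operators on $\R$ is again maximal monotone (on the line this is elementary, or one cites \cite{Brezis}). $T$-periodicity in $t$ of $\mathcal A$ follows from that of $A_i$ and $z_i$ (hence $w_i$), and measurability in $t$ and local boundedness by a measurable function are inherited from \ref{cond:Dreg} together with \ref{cond:Dzreg}. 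Similarly $p$ is a Carathéodory function, $T$-periodic in $t$, and locally Lipschitz in $v$ uniformly in $t$ because each $\psi_i$ is and $w_i$ is bounded; $B$ contributes only a measurable bounded additive term. This settles \ref{cond:S1}.

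For \ref{cond:S2}, I would absorb the bound on $w_i$ into the threshold: since the $z_i$ are $T$-periodic and Lipschitz, $\|w_i\|_\infty=:W_i<\infty$, so setting $R':=R+\max_i W_i$ we have $v\geq R'\Rightarrow v+w_i(t)\geq R$ for all $i$, whence by \ref{cond:Dcoer} $F_i(t,v+w_i(t))\leq \ell_i^+(t)$ a.e. Therefore for $v\geq R'$ and a.e.\ $t$,
\begin{equation*}
	y\in G(t,v)\ \Longrightarrow\ y\leq \frac{1}{M}\Bigl(B(t)+\sum_{i=1}^n\ell_i^+(t)\Bigr)=:\ell_d^+(t),
\end{equation*}
and symmetrically $y\geq -\ell_d^-(t):=\frac{1}{M}\bigl(B(t)-\sum_i\ell_i^-(t)\bigr)$ for $v\leq -R'$. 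The sign conditions $\int_0^T\ell_d^\pm(s)\,\dd s<0$ are exactly the two inequalities in \ref{cond:Dcoer} after dividing by $M>0$. Thus \ref{cond:S2} holds with constant $R'$ and the functions $\ell_d^\pm$ just defined, and both are $T$-periodic and measurable. With the structural assumptions in hand, Theorem~\ref{th:asymptper} gives that every solution is asymptotically $T$-periodic, and Theorem~\ref{th:Abs_gen} produces the fixed points $\alpha\leq\beta$ of the Poincaré map $\Phi_T$ corresponding to $T$-periodic solutions $v_\alpha,v_\beta$ (with $v_\alpha(0)=\alpha$, $v_\beta(0)=\beta$) such that $K=[\alpha,\beta]$ is the global attractor. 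The squeezing conclusion $\lim_{t\to+\infty}\dist(v(t),[v_\alpha(t),v_\beta(t)])=0$ then follows: for each solution $v$, Theorem~\ref{th:asymptper} gives a $T$-periodic $v^*$ with $v^*(0)\in K=[\alpha,\beta]$ and $v(t)-v^*(t)\to 0$; by the comparison/monotonicity-of-$\Phi_T$ argument (orbits cannot cross, by forward uniqueness), the periodic orbit through any point of $[\alpha,\beta]$ stays between $v_\alpha$ and $v_\beta$ for all $t$, so $v^*(t)\in[v_\alpha(t),v_\beta(t)]$ for every $t$, and the distance bound is immediate.

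The only genuinely delicate point — and the one I would spell out carefully — is the claim that $v^*(t)\in[v_\alpha(t),v_\beta(t)]$ for \emph{all} $t$, not merely at $t=0$. In the fully general (non-monotone) setting of this theorem we cannot appeal to Theorem~\ref{th:Abs_weakmon} (which requires $G$ decreasing in $v$ and would give the sharper description of $K$ as the set of all periodic solutions). Instead one argues directly from order preservation of the flow: if $v^*(0)\in[\alpha,\beta]=[v_\alpha(0),v_\beta(0)]$, then by monotonicity of the time-$t$ flow map (again a consequence of forward uniqueness in dimension one — two solutions cannot cross) we get $v_\alpha(t)\leq v^*(t)\leq v_\beta(t)$ for every $t\geq 0$, and by $T$-periodicity of all three functions this extends to every $t\in\R$. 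This is the heart of the matter; everything else is bookkeeping to fit the model into the abstract hypotheses. I expect no real obstacle beyond being careful that all the translated and summed maps retain maximal monotonicity and the measurability/Carathéodory structure, and that the threshold $R$ is enlarged to account for $\|w_i\|_\infty$.
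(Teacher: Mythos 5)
Your proposal is correct and follows essentially the same route as the paper: verify that $G$ in \eqref{eq:gdiscr} satisfies the structural assumptions \ref{cond:S1}--\ref{cond:S2} (enlarging the threshold by the Lipschitz bound on the $z_i$ and setting $\ell_d^\pm$ from $B$ and $\ell_i^\pm$), then invoke Theorems~\ref{th:Abs_gen} and~\ref{th:asymptper}. You simply spell out in more detail what the paper leaves implicit, namely the maximal monotonicity of the shifted sum defining $\mathcal A$ and the order-preservation argument giving $v^*(t)\in[v_\alpha(t),v_\beta(t)]$, both of which are sound.
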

\begin{proof}
	We observe that the function $G$ in \eqref{eq:gdiscr} is, by \ref{cond:Dreg}, of the form \eqref{eq:Gform}; furthermore, by  \ref{cond:Dcoer} and the Lipschitz continuity of the $z_i$, it satisfies \ref{cond:S2}.  Indeed, denoting by $\Lambda_z$ the Lipschitz constant of the $z_i$, and taking $R:=r+\Lambda_z+1$ we get, for every $v\geq R$ and $t\in \R$,
	\begin{align*}
		G(t,v)=\frac{1}{M}\left(B(t)+\sum_{i=0}^n  F_i(t,v+w_i(t))\right)\leq \frac{1}{M}\left(B(t)+\sum_{i=0}^n  \ell_i^+(t)\right)=:\ell_d^+(t)
	\end{align*} 
	where the function $\ell_d^+(t)$ thus defined satisfies, by \ref{cond:Dcoer}, the requirements of  \ref{cond:S2}. The other bound in \ref{cond:S2} can be deduced analogously. We can therefore apply  Theorem~\ref{th:asymptper}
 and conclude. 
\end{proof}

\paragraph{Monotone friction forces.}
Most of the classical examples of friction forces, for instance viscous of dry friction, are however monotone with respect to velocity. This means that in \ref{cond:Dreg} we can set $\psi_i\equiv 0$; namely
\begin{enumerate}[label=\textup{(D\arabic*)},resume]
	\item \label{cond:Dmono} For every index $i$, the function $F_i\colon\R\times \R\to \PP(\R)\setminus\emptyset$ is a set-valued map, locally bounded by a measurable function, $T$-periodic in $t$, maximal monotone decreasing  in $u$ for every $t$ and measurable in $t$ for every $u$.
\end{enumerate}
 The most important example   of this family, excluding strictly monotone ones, is that of dry friction, defined by $F_i(t,u)=-\alpha\, \partial_u \abs{u}$.

\begin{theorem}\label{th:Dmonot}
	Let us consider a discrete model of crawler as above, satisfying \ref{cond:Dzreg}, \ref{cond:DBreg}, \ref{cond:Dcoer} and \ref{cond:Dmono}. Then, for every solution $v(t)$ of \eqref{eq:gdiscr} there exists a $T$-periodic solution $\bar v$ such that
	\begin{equation*}
		\lim_{t\to+\infty} (v(t)-\bar v(t))=0
	\end{equation*} 
 Moreover, there exists an  interval $[a,b]$,   possibly degenerate,  such that $\bar v$ is a $T$-periodic solution if and only if $\bar v=v^*+c$ for some $c\in[a,b]$. 
\end{theorem}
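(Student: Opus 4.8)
The plan is to verify that the dynamics \eqref{eq:gdiscr} under hypotheses \ref{cond:Dzreg}, \ref{cond:DBreg}, \ref{cond:Dcoer} and \ref{cond:Dmono} fits exactly into the abstract framework of Section~\ref{sec:abstr} with $G$ monotone decreasing in $v$, and then invoke Theorems~\ref{th:asymptper} and~\ref{th:Abs_weakmon}. First I would check the structural assumptions: condition \ref{cond:Dmono} gives that each $F_i(t,\cdot)$ is maximal monotone decreasing, so writing $G(t,v)=\frac1M(B(t)+\sum_i F_i(t,v+w_i(t)))$, the term $-\mathcal{A}(t,v):=\frac1M\sum_i F_i(t,v+w_i(t))$ is (for each fixed $t$) a finite sum of maximal monotone decreasing maps composed with the shifts $v\mapsto v+w_i(t)$; since the $w_i$ are only $L^\infty$ this is measurable in $t$ and maximal monotone decreasing in $v$, and $p(t,v):=\frac1M B(t)$ is trivially a Carathéodory function. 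This gives \ref{cond:S1}. For \ref{cond:S2} I would reproduce verbatim the coercivity estimate in the proof of Theorem~\ref{th:Dgen}: taking $R$ larger than the $r$ of \ref{cond:Dcoer} plus the Lipschitz constant $\Lambda_z$ of the $z_i$ (so that $v\geq R$ forces $v+w_i(t)\geq r$ for all $i$), the bounds in \ref{cond:Dcoer} yield $G(t,v)\leq \ell_d^+(t):=\frac1M(B(t)+\sum_i\ell_i^+(t))$ with $\int_0^T\ell_d^+<0$, and symmetrically at $-\infty$. Hence $G$ satisfies the structural assumptions and, by \ref{cond:Dmono}, is additionally monotone decreasing in $v$.

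Given this, Theorem~\ref{th:asymptper} applies directly and furnishes, for every solution $v$ of \eqref{eq:gdiscr}, a $T$-periodic solution $\bar v$ with $\bar v(0)\in K$ and $\lim_{t\to+\infty}(v(t)-\bar v(t))=0$, which is the first displayed claim. For the second claim, Theorem~\ref{th:Abs_weakmon} tells us that $K=[\alpha,\beta]$ is precisely the set of fixed points of $\Phi_T$, that each corresponds to a $T$-periodic solution, and that every $T$-periodic solution has the form $v_\alpha(t)+(\gamma-\alpha)$ for $\gamma\in[\alpha,\beta]$. I would then simply set $v^*:=v_\alpha$, $a:=0$ and $b:=\beta-\alpha$: a function $\bar v$ is a $T$-periodic solution iff $\bar v=v^*+c$ with $c\in[a,b]=[0,\beta-\alpha]$, and the interval is degenerate exactly when $\alpha=\beta$. (One could equally centre the interval differently, e.g. take $v^*=v_\gamma$ for a reference $\gamma$ and shift $[a,b]$ accordingly; this is immaterial.)

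The argument is essentially a bookkeeping exercise, so there is no serious obstacle — the only point requiring a little care is the measurability and maximal monotonicity of the composite term $\mathcal{A}(t,v)=-\frac1M\sum_i F_i(t,v+w_i(t))$ in the variable $v$: one must note that precomposition with the continuous shift $v\mapsto v+w_i(t)$ preserves maximal monotonicity for each fixed $t$, that a finite sum of maximal monotone maps on $\R$ with overlapping domains (here all of $\R$, since each $F_i$ is nowhere empty-valued) is again maximal monotone, and that joint measurability is inherited from the measurability of the $F_i$ in $t$ and the Lipschitz (hence measurable) dependence of $w_i$ on $t$. All of this is already implicitly used in the proof of Theorem~\ref{th:Dgen}, so it suffices to point back to that verification and add the single observation that under \ref{cond:Dmono} the Carathéodory perturbation $\psi_i$ vanishes, making $G$ monotone decreasing in $v$ and thereby unlocking Theorem~\ref{th:Abs_weakmon} in addition to Theorem~\ref{th:asymptper}.
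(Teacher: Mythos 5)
Your proposal is correct and follows essentially the same route as the paper: verify the structural assumptions \ref{cond:S1}--\ref{cond:S2} (using \ref{cond:Dmono} for the maximal monotone decreasing structure and \ref{cond:Dcoer} with the Lipschitz bound on the $z_i$ for the coercivity), then conclude via Theorem~\ref{th:Abs_weakmon}. Your extra bookkeeping with $v^*=v_\alpha$, $[a,b]=[0,\beta-\alpha]$ and the remarks on sums and shifts of maximal monotone maps just make explicit what the paper leaves implicit.
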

\begin{proof}
	The function $G$ in \eqref{eq:gdiscr} is, by \ref{cond:Dmono}, of the form \eqref{eq:Gform} and monotone decreasing in $v$; furthermore, by \ref{cond:Dcoer} and the Lipschitz continuity of the $z_i$, it satisfies \ref{cond:S2}. The conclusion follows by Theorem~\ref{th:Abs_weakmon}.
\end{proof}

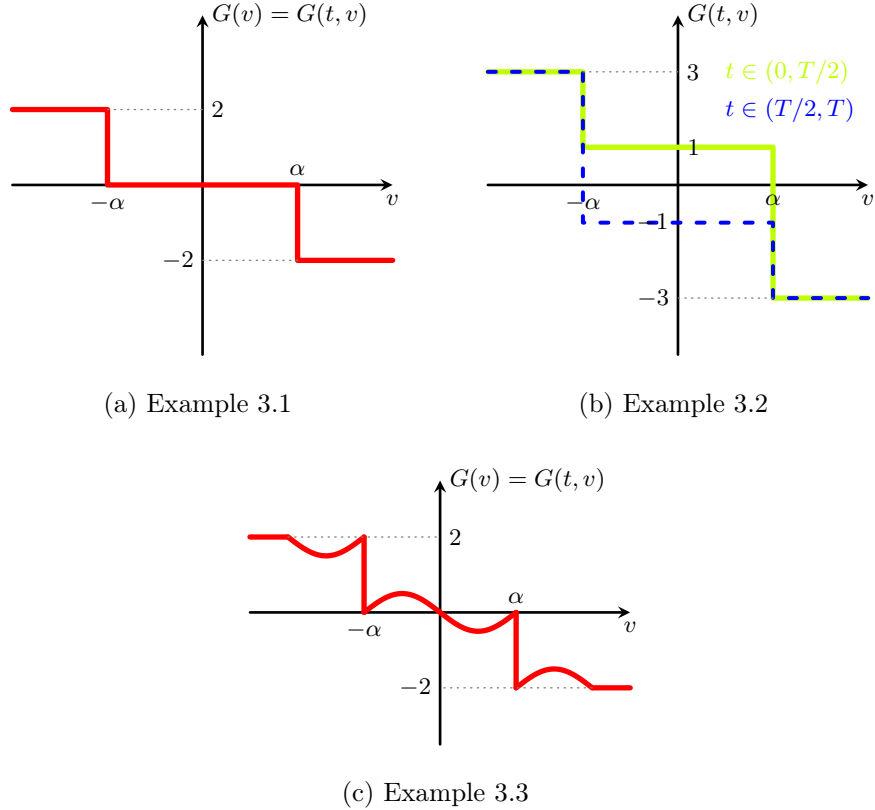
\begin{figure}[t]
    \centering
\subcaptionbox{Example~\ref{ex:dry}\label{fig:ex1}}
{	\begin{tikzpicture}[line cap=round,line join=round,>=stealth,x=1cm,y=1cm, line width=1,scale=0.5]
\clip(-6,-5) rectangle (6,5);
\footnotesize
\draw[->] (-5,0)--(5,0)node[below]{$v$};
\draw[->] (0,-4.5) --(0,4.5) node[right]{$G(v)=G(t,v)$};
\draw [line width=2pt,color=red] (-5,2)-- (-2.5,2)--(-2.5,0)--(2.5,0)--(2.5,-2)--(5,-2);
\draw [line width=0.5pt,dotted, gray] (-2.5,2)--(0,2) node[anchor=west, black]{$2$};
\draw [line width=0.5pt,dotted,gray] (2.5,-2) -- (0,-2) node[anchor=east, black]{$-2$};
\draw (2.5,0) node[anchor=south] {$\alpha$};
\draw (-2.5,0) node[anchor=north] {$-\alpha$};
\end{tikzpicture}}
\subcaptionbox{Example~\ref{ex:drystar}\label{fig:ex2}}
{	\begin{tikzpicture}[line cap=round,line join=round,>=stealth,x=1cm,y=1cm, line width=1,scale=0.5]
\clip(-6,-5) rectangle (6,5);
\footnotesize
\draw[->] (-5,0)--(5,0)node[below]{$v$};
\draw[->] (0,-4.5) --(0,4.5) node[right]{$G(t,v)$};
\draw [line width=2pt,color=lime] (-5,3)-- (-2.5,3)--(-2.5,1)--(2.5,1)--(2.5,-3)--(5,-3);
\draw [line width=1.5pt,color=blue,loosely dashed] (-5,3)-- (-2.5,3)--(-2.5,-1)--(2.5,-1)--(2.5,-3)--(5,-3);
\draw [line width=0.5pt,dotted, gray] (-2.5,3)--(0,3) node[anchor=west, black]{$3$};
\draw [line width=0.5pt,dotted,gray] (2.5,-3) -- (0,-3) node[anchor=east, black]{$-3$};
\draw (2.5,0) node[anchor=north] {$\alpha$};
\draw (-2.5,0) node[anchor=north] {$-\alpha$};
\draw (0,-1) node[anchor= east] {$-1$};
\draw (0,1) node[anchor=west] {$1$};
\draw [lime] (1,3)node[anchor=west] {$t\in(0,T/2)$};
\draw [blue] (1,2)node[anchor=west] {$t\in(T/2,T)$};
\end{tikzpicture}}
\subcaptionbox{Example~\ref{ex:strib}\label{fig:ex3}}
{	\begin{tikzpicture}[line cap=round,line join=round,>=stealth,x=1cm,y=1cm, line width=1, scale=0.5]
			\clip(-6,-4) rectangle (6,5);
			\footnotesize
			\draw[->] (-5,0)--(5,0)node[below]{$v$};
			\draw[->] (0,-3.5) --(0,3.5) node[right]{$G(v)=G(t,v)$};
			\draw[line width=2pt,color=red,smooth,samples=100,domain=-2:2] plot(\x,{-0.5*sin(90*\x)});
			\draw[line width=2pt,color=red,smooth,samples=100,domain=2:4] plot(\x,{-2-0.5*sin(90*\x)});
			\draw[line width=2pt,color=red,smooth,samples=100,domain=-4:-2] plot(\x,{2-0.5*sin(90*\x)});
			\draw[line width=2pt,color=red] (-5,2)--(-4,2);
			\draw[line width=2pt,color=red] (4,-2)--(5,-2);
			\draw[line width=2pt,color=red] (-2,2)--(-2,0);
			\draw[line width=2pt,color=red] (2,-2)--(2,0);
			\draw [line width=0.5pt,dotted, gray] (-4,2)--(0,2) node[anchor=west, black]{$2$};
			\draw [line width=0.5pt,dotted,gray] (4,-2) -- (0,-2) node[anchor=east, black]{$-2$};
			\draw (2,0) node[anchor=south] {$\alpha$};
			\draw (-2,0) node[anchor=north] {$-\alpha$};
	\end{tikzpicture}}
\caption{The function $G$ in Examples \ref{ex:dry}, \ref{ex:drystar} and \ref{ex:strib}. In Examples \ref{ex:dry} and \ref{ex:strib}, the resulting function $G(t,v)$ does not depend on time $t$. In Example~\ref{ex:drystar} the function $G$ switches every half-period. The behaviour for $t\in(0,T/2)$ is illustrated by the solid green line, whereas   for $t\in(T/2,T)$ it is given by the dashed blue one.}
\label{fig:ex}
\end{figure}

\begin{example} \label{ex:dry}
	Let us consider the following example with $n=2$, $M=1$ and $B\equiv 0$, characterized by isotropic dry friction, constant in time and equal on both point masses, and a $T$-periodic triangular wave actuation on shape; more precisely we set
	\begin{equation*}
		F_1(t,u)=F_2(t,u)=-\partial_u \abs{u} \qquad\qquad w_1(t)=-w_2(t)=\begin{cases}
			-\alpha &\text{for $t\in(0,T/2)$}\\
			\alpha &\text{for $t\in(T/2,T)$}
		\end{cases}
	\end{equation*}
for some $\alpha\neq 0$. The function $G$ from \eqref{eq:gdiscr} is constant in time (except for times of the form $t=kT/2$, $k\in\Z$, where we can extend it by continuity), with values
\begin{equation*}
	G(t,u)=-\partial_u \left(\abs{u-\alpha}+\abs{u+\alpha}\right) 
\end{equation*}
We notice that $G$ is monotone decreasing in $u$, with $0\in G(t,u)$ if and only if $u\in[-\alpha,\alpha]$. Hence the set of periodic solutions provided by Theorem~\ref{th:Dmonot} consists of the constant functions $\bar v\in[-\alpha,\alpha]$.

We emphasize that, in this example, different initial conditions might lead to a different asymptotic velocities of the crawler, even having opposite sign.
\end{example}

In the previous example, all periodic solution have constant velocity $\dot v$ of the barycenter, so we might ask ourselves whether this is a necessary feature for multiplicity. The issue is furthermore  relevant since the friction configuration considered in the example  is problematic \emph{at the quasistatic regime}, violating uniqueness condition $(\ast)$ in \cite{Gid18},  leading to multiplicity of solutions (to the initial value problem) with the same shape change but different velocity \cite[Example~3.2]{Gid18}.
The answer is negative, as we show with the following example satisfying $(\ast)$, which therefore plays no role \emph{in the dynamic regime}. 

\begin{example}\label{ex:drystar}
  Let us modify Example~\ref{ex:dry} just by doubling the friction on the first point mass, namely $F_1(t,u)=-2 \partial_u\abs{u}$, and setting $T<4\alpha$. 
  Considering only the relevant part, we have
  \begin{equation*}
	G(t,u)=\begin{cases}
			1 &\text{for $t\in(0,T/2), u\in(-\alpha,\alpha)$}\\
			-1 &\text{for $t\in(T/2,T), u\in(-\alpha,\alpha)$}
		\end{cases} 
\end{equation*}
Hence, for each $u_0\in[-\alpha, \alpha-T/2]$ we have a parametric set of periodic solutions
\begin{equation}
    u=\begin{cases} u_0 +t &\text{for $t\in(0,T/2)$}\\
    u_0 +(T-t) &\text{for $t\in(T/2,T)$}
    \end{cases}
\end{equation}
of the dynamics, each with average velocity $u_0+T/4$.
\end{example}

\begin{example}\label{ex:strib}
	We now consider a variation of Example~\ref{ex:dry} with the addition of a Stribeck effect on friction. We set
	\begin{equation*}
	F_1(t,u)=F_2(t,u)=-\partial_u \abs{u}+\psi(u) \qquad\qquad \psi(u)=\begin{cases}
		\frac{1}{2}\sin(\frac{\pi}{\alpha} u ) &\text{for $u\in(-\alpha,\alpha)$}\\
		0 &\text{elsewhere}
	\end{cases}
\end{equation*}	
	with $\alpha, w_1, w_2, B$ as in Example~\ref{ex:dry}. Hence
	 \begin{equation*}
		G(t,u)=-\partial_u \left(\abs{u-\alpha}+\abs{u+\alpha}\right) +\psi(u-\alpha)+\psi(u+\alpha) \,.
\end{equation*}

The dynamics \eqref{eq:gdiscr} admits exactly three $T$-periodic solutions: the constant solutions $v^*\equiv-\alpha$ and $v^{**}\equiv\alpha$ are 
semistable, with basins of attraction given  respectively by $(-\infty,-\alpha]$ and $[\alpha,+\infty)$, whereas the constant solution $v^0\equiv0$ is stable with basin of attraction $(-\alpha, \alpha)$. All the solutions of the basins of attractions of $v^*$ and $v^{**}$ attain the corresponding periodic regime in finite time.
\end{example}

\paragraph{Strictly monotone friction forces.}
We strengthen the assumptions of the previous case, requiring strict monotonicity of the functions $F_i(t,\cdot)$.
\begin{enumerate}[label=\textup{(D\arabic*)},resume]
\item \label{cond:Dstrict}  For every index $i$, the function $F_i\colon\R\times \R\to \PP(\R)\setminus\emptyset$ is a set-valued map, locally bounded by a measurable function, $T$-periodic in $t$, maximal strictly monotone decreasing  in $u$ for every $t$ and measurable in $t$ for every $u$.
\end{enumerate}

The most important example in this family is that of viscous friction $F_i(t,u)=-\mu u$.
For set-valued maps, we mention the case of Bingham friction, defined by $F_i(t,u)=-\mu u -\alpha\, \partial_u\abs{u}$.

\begin{theorem}\label{th:Dstrict}
	Let us consider a discrete model of crawler as above, satisfying \ref{cond:Dzreg}, \ref{cond:DBreg}, \ref{cond:Dcoer}  and \ref{cond:Dstrict}. Then the dynamics \eqref{eq:gdiscr} admits exactly one $T$-periodic solution  $v^*$, which is a global attractor for the dynamics, i.e.~every solution $v(t)$ of \eqref{eq:gdiscr} satisfies
	\begin{equation}\label{eq:Dglobattr}
		\lim_{t\to+\infty}  (v(t)- v^*(t))=0 \,.
	\end{equation} 
\end{theorem}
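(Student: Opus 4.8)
The plan is to reduce the statement to the abstract Theorem~\ref{th:Abs_strmon}, exactly as Theorems~\ref{th:Dgen} and \ref{th:Dmonot} were reduced to Theorems~\ref{th:asymptper} and \ref{th:Abs_weakmon}; the only new point is that strict monotonicity of the individual friction laws propagates to the map $G$ of \eqref{eq:gdiscr}.

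First I would check that $G$ has the form \eqref{eq:Gform}. Writing $G(t,v)=-\mathcal A(t,v)+p(t,v)$ with $p(t,v):=B(t)/M$ and $\mathcal A(t,v):=-\tfrac1M\sum_{i=1}^n F_i(t,v+w_i(t))$, each $v\mapsto-\tfrac1M F_i(t,v+w_i(t))$ is maximal monotone increasing on all of $\R$ by \ref{cond:Dstrict} (translation of the argument and multiplication by a positive constant preserve maximality for maps defined on all of $\R$), and since every summand has full domain their sum $\mathcal A(t,\cdot)$ is again maximal monotone increasing. Measurability in $t$, $T$-periodicity and local boundedness of $\mathcal A$ follow from the corresponding properties of the $F_i$ and from \ref{cond:Dzreg} (which makes $w_i=\dot z_i$ bounded and $t\mapsto v+w_i(t)$ measurable), while $p$ is Carathéodory, $T$-periodic by \ref{cond:DBreg} and constant in $v$; hence \ref{cond:S1} holds. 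Condition \ref{cond:S2} is then obtained exactly as in the proof of Theorem~\ref{th:Dgen}: taking a common Lipschitz constant $\Lambda_z$ of the $z_i$ and $R$ larger than the radius appearing in \ref{cond:Dcoer} increased by $\Lambda_z$, one gets $G(t,v)\le\tfrac1M\bigl(B(t)+\sum_i\ell_i^+(t)\bigr)=:\ell_d^+(t)$ for $v\ge R$ with $\int_0^T\ell_d^+<0$, and the symmetric lower bound for $v\le-R$.

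It then remains to show that $G(t,\cdot)$ is \emph{strictly} monotone decreasing. For $v_1\ne v_2$ and $y_j\in G(t,v_j)$ we may write $y_j=\tfrac1M B(t)+\tfrac1M\sum_i y_{i,j}$ with $y_{i,j}\in F_i(t,v_j+w_i(t))$, so that
\[
(y_2-y_1)(v_2-v_1)=\tfrac1M\sum_{i=1}^n (y_{i,2}-y_{i,1})\bigl((v_2+w_i(t))-(v_1+w_i(t))\bigr)<0,
\]
each summand being strictly negative by the strict monotonicity of $F_i$ at the distinct points $v_1+w_i(t)\ne v_2+w_i(t)$. All hypotheses of Theorem~\ref{th:Abs_strmon} are thus met, which yields the unique $T$-periodic solution $v^*$ and the convergence \eqref{eq:Dglobattr} of every solution.

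I expect the only genuinely delicate point to be the maximality of the finite sum $\mathcal A(t,\cdot)$ of maximal monotone maps, since the sum of maximal monotone operators need not be maximal in general. Here it is harmless because each $F_i(t,\cdot)$ has domain all of $\R$, so the classical sum rule applies; alternatively, in this one-dimensional setting the sum has a monotone graph which is closed and projects surjectively onto the $v$-axis, and that already forces maximality. Everything else is a routine repetition of the computations in the proofs of Theorems~\ref{th:Dgen} and \ref{th:Dmonot}.
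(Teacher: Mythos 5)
Your proposal is correct and follows essentially the same route as the paper: the paper's proof is a one-line reduction to Theorem~\ref{th:Abs_strmon}, noting that by \ref{cond:Dstrict} the map $G$ is of the form \eqref{eq:Gform} and strictly monotone decreasing in $v$, and that \ref{cond:Dcoer} together with the Lipschitz continuity of the $z_i$ gives \ref{cond:S2}. Your additional verifications (maximality of the sum via full domains, and the termwise strict-monotonicity computation using $v_1+w_i(t)\neq v_2+w_i(t)$) are exactly the details the paper leaves implicit, and they are sound.
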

\begin{proof}
	The function $G$ in \eqref{eq:gdiscr} is, by \ref{cond:Dstrict}, of the form \eqref{eq:Gform} and strictly monotone decreasing in $v$; furthermore, by  \ref{cond:Dcoer} and the Lipschitz continuity of the $z_i$, it satisfies \ref{cond:S2}. The conclusion follows by Theorem~\ref{th:Abs_strmon}.
\end{proof}

\begin{example}[Asymptotically incompetent crawler]\label{ex:incomp}
Let us consider the case with $n\geq 2$, $B\equiv 0$ and (constant, isotropic) viscous friction, namely $F_i(t,u)=-\mu_i u$, with $\mu_i>0$. The unique $T$-periodic solution $v^*$ of \eqref{eq:gdiscr} satisfies
\begin{equation*}
0=\int_0^T \dot v^*(t)\dd t=-\sum_{i=1}^n\frac{\mu_i}{M}\int_0^T  v^*(t)\dd t -\sum_{i=1}^n\frac{\mu_i}{M}\int_0^Tw_i(t)\dd t \,.
\end{equation*}
Since the $w_i$ are the derivatives of the $T$-periodic functions $z_i$, we deduce that $\int_0^Tw_i(t)\dd t=0$ and therefore
\begin{equation*}
\int_0^T  v^*(t)\dd t =0 \,.
\end{equation*}
The fact that, in discrete models, constant, isotropic viscous friction leads to an ``incompetent'' crawler, with a null net displacement, was observed, in a quasi-static setting, in \cite{DeSTat12}. Here we see that, considering also inertia, such a crawler is still incompetent, but in an asymptotic sense. 
\end{example}

The issue in Example~\ref{ex:incomp} can be overcome still in a viscous setting, if we consider anisotropic viscosity or time-dependent viscosity. We illustrate this second possibility with the following example, that can be seen a viscous adaptation of two-anchor crawling.
\begin{example}
\label{ex:comp}
Let us set $n=2$, $M=2$, $B\equiv 0$ and
\begin{equation*}
F_1(t,u)=-(2+\sin t)u \qquad F_2(t,u)=-(2-\sin t)u \qquad z_1(t)=-z_2(t)=2+\cos t
\end{equation*}
so that the dynamics \eqref{eq:gdiscr} reads
\begin{equation*}
	\dot v=-2v+\sin^2 t
\end{equation*}	
	whose solutions have the form
\begin{equation*}
	v(t)=\frac{\sin^2 t}{2}+\frac{\cos 2t-\sin 2t}{8}+c e^{-2t}
\end{equation*}	
	with $c\in\R$. Clearly the solution $v^*$ associated with $c=0$ is the unique $T$-periodic solution of the system and a global attractor for the dynamics. Unlike  the previous case, however, it is easily verified that, in the asymptotic regime, each actuation cycle of duration $2\pi$ produces a forward net displacement of the crawler of $2\pi/4$.
\end{example}

\paragraph{Dry friction forces with additional time-regularity}

\begin{theorem}\label{th:Dmonot_reg}	
	We make the following assumptions:
	\begin{enumerate}[label=\textup{(\roman*)}]
		\item \label{cond:drysmooth}
		Each mass is affected by dry friction with nonzero coefficients continuous in time, i.e.
		\begin{equation*}
		F_i(t,u)= \begin{cases}
			\{-\mu_i^+(t)\} & \text{for $u>0$}\\
			\{\mu_i^-(t)\} & \text{for $u<0$}\\
			[-\mu_i^+(t),\mu_i^-(t)] & \text{for $u=0$}
		\end{cases}
		\end{equation*}
	where the $\mu_i^\pm$ are continuous, $T$-periodic and positive;
	\item the shape functions $z_i(t)$ are continuously differentiable and $T$-periodic;
	\item the external load $B(t)$ is continuous, $T$-periodic and such that
	\begin{equation*}
	\int_0^T \left( B(s)-\sum_{i=1}^n \mu_i^+(s)\right)\dd s <0< \int_0^T \left(B(s)+\sum_{i=1}^n \mu_i^-(s)\right)\dd s 
	\end{equation*}
	\end{enumerate}
Then the dynamics \eqref{eq:gdiscr} admits exactly one $T$-periodic solution  $v^*$, which is a global attractor for the dynamics in the sense of \eqref{eq:Dglobattr}.
\end{theorem}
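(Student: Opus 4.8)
The strategy mirrors the proofs of Theorems~\ref{th:Abs_strmon} and~\ref{th:Abs_monreg1}: reduce to the scalar attractor theorem and then show that the attractor is a single point. First I would check that $G$ from~\eqref{eq:gdiscr} satisfies the hypotheses of Theorem~\ref{th:Abs_weakmon}. By hypothesis~\ref{cond:drysmooth} each $F_i(t,\cdot)$ is maximal monotone decreasing, locally bounded and, being continuous in $t$, measurable in $t$ and $T$-periodic; since $z_i\in\CC^1$ is $T$-periodic, $w_i=\dot z_i$ is continuous and $T$-periodic, so $G(t,v)=-\mathcal A(t,v)+\tfrac1M B(t)$ with $\mathcal A(t,v)=-\tfrac1M\sum_i F_i(t,v+w_i(t))$ maximal monotone increasing in $v$: this is~\ref{cond:S1}, and $G$ is monotone decreasing in $v$. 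Taking $R:=1+\max_i\norm{w_i}_{L^\infty}$ one has $G(t,v)=\{\tfrac1M(B(t)-\sum_i\mu_i^+(t))\}$ for $v\ge R$ and $G(t,v)=\{\tfrac1M(B(t)+\sum_i\mu_i^-(t))\}$ for $v\le -R$, and hypothesis~(iii) is exactly $\int_0^T\ell_d^\pm<0$; so~\ref{cond:S2} holds. Hence Theorem~\ref{th:Abs_weakmon} applies: letting $v_\alpha$ be the $T$-periodic solution with $v_\alpha(0)=\alpha$ and $\ell:=\beta-\alpha\ge0$, the $T$-periodic solutions are exactly $v_\alpha+c$, $c\in[0,\ell]$. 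If $\ell=0$ we conclude by Corollary~\ref{cor:Abs_unique}, so assume $\ell>0$ and look for a contradiction.

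Because $v_\alpha+c$ solves the inclusion for every $c\in[0,\ell]$, we have $\dot v_\alpha(t)\in G(t,v_\alpha(t)+c)$ for a.e.\ $t$; intersecting the exceptional null sets over rational $c$ and using that $G(t,\cdot)$ has closed graph, off a single null set $\dot v_\alpha(t)\in G(t,v_\alpha(t)+c)$ for \emph{every} $c\in[0,\ell]$. Set $\xi_i(t):=v_\alpha(t)+w_i(t)$, the ground velocity of mass $i$ along this solution, which is continuous and $T$-periodic. The key claim is that $\xi_i(t)\notin(-\ell,0)$ for all $t$ and all $i$. Indeed, if $\xi_i(t_0)\in(-\ell,0)$ at a $t_0$ in the good set, then $-w_i(t_0)\in(v_\alpha(t_0),v_\alpha(t_0)+\ell)$; since $\mu_i^+(t_0),\mu_i^-(t_0)>0$, the map $G(t_0,\cdot)$ has a jump of size at least $\tfrac1M(\mu_i^+(t_0)+\mu_i^-(t_0))>0$ at $-w_i(t_0)$, so its (single) values just below and just above $-w_i(t_0)$ are disjoint, contradicting that $\dot v_\alpha(t_0)$ belongs to $G(t_0,v_\alpha(t_0)+c)$ for $c$ on both sides of $-w_i(t_0)-v_\alpha(t_0)$. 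Thus $\{t:\xi_i(t)\in(-\ell,0)\}$ is null; being open by continuity of $\xi_i$, it is empty.

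By continuity and connectedness of $\R/T\Z$, for each $i$ either $\xi_i\ge0$ on $[0,T]$ or $\xi_i\le-\ell$ on $[0,T]$. Moreover $\int_0^T\xi_i(t)\dd t=\int_0^T v_\alpha(t)\dd t+\int_0^T\dot z_i(t)\dd t=\int_0^T v_\alpha(t)\dd t=:V$, the same for every $i$ (the $z_i$-term vanishes by $T$-periodicity). Hence if $V\ge0$ then $\xi_i\ge0$ everywhere for all $i$, while if $V<0$ then $\xi_i\le-\ell$ everywhere for all $i$. In the first case, for a.e.\ $t$ and any $c\in(0,\ell)$ we get $v_\alpha(t)+c+w_i(t)=\xi_i(t)+c>0$ for all $i$, so $G(t,v_\alpha(t)+c)=\{\tfrac1M(B(t)-\sum_i\mu_i^+(t))\}$ and therefore $\dot v_\alpha(t)=\tfrac1M(B(t)-\sum_i\mu_i^+(t))$ a.e.; integrating over $[0,T]$ gives $0=\tfrac1M\int_0^T(B-\sum_i\mu_i^+)<0$, contradicting hypothesis~(iii). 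In the second case the symmetric argument ($\xi_i(t)+c\le-\ell+c<0$) yields $\dot v_\alpha(t)=\tfrac1M(B(t)+\sum_i\mu_i^-(t))$ a.e.\ and $0=\tfrac1M\int_0^T(B+\sum_i\mu_i^-)>0$, again a contradiction. Hence $\ell=0$, and Corollary~\ref{cor:Abs_unique} gives the conclusion.

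The delicate point is the claim $\xi_i\notin(-\ell,0)$: this is exactly where the dry-friction structure enters (the genuine jump of $G(t,\cdot)$ at zero relative velocity needs $\mu_i^\pm>0$) together with the extra time regularity ($z_i\in\CC^1$, so that $\xi_i$ is continuous and the null open set is empty, and $\mu_i^\pm$ continuous). An alternative packaging of the same geometric fact runs through the ordered switching curves and their clustering property in Lemma~\ref{lemma:gamma}; I expect the only other care needed is the bookkeeping of the exceptional null sets uniformly in $c$ and of measurability.
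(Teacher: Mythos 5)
Your proof is correct, and while it follows the same overall skeleton as the paper's argument (apply Theorem~\ref{th:Abs_weakmon} to get that all $T$-periodic solutions are translates $v_\alpha+c$, $c\in[0,\ell]$, then use the strict jump of $G(t,\cdot)$ across each switching level $-w_i(t)$, guaranteed by $\mu_i^\pm>0$ and the continuity of $w_i=\dot z_i$, to force $\ell=0$), it packages the second half differently. The paper introduces the ordered switching curves $\Gamma^{-w}_j$ of Definition~\ref{def:gamma} and runs a four-case analysis, invoking Lemma~\ref{lemma:gamma} (the clustering of consecutive order statistics, proved via the zero averages of the $-w_i$) to exclude the case where the band $[v^*,v^*+\delta]$ sits between two consecutive curves. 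You bypass that machinery: your key claim that each relative velocity $\xi_i=v_\alpha+w_i$ never enters the open band $(-\ell,0)$ plays the role of the paper's first case, and then continuity plus connectedness gives a per-mass sign dichotomy, while the observation that all $\xi_i$ share the same mean $V$ (because $\int_0^T\dot z_i=0$) forces all masses onto the same side — this is exactly the equal-averages clash hidden inside the proof of Lemma~\ref{lemma:gamma}, inlined in the $\xi$-coordinates. Your treatment of the two extreme cases by integrating the then single-valued equation against hypothesis (iii) is also slightly more careful than the paper's, which asserts a pointwise sign of $G$ where only the integral condition is available; the integral form you use is the right fix. Two minor remarks: the uniform-in-$c$ inclusion via countable $c$ and the closed graph of the maximal monotone map is fine but overkill (the values $c=0$, $c=\ell$ and one interior $c$ suffice), and the phrase that the values of $G(t_0,\cdot)$ ``just below and just above'' $-w_i(t_0)$ are single-valued is imprecise (other switching levels may coincide), though irrelevant since monotonicity of the remaining terms plus the $i$-th jump already separates $G(t_0,u_a)$ from $G(t_0,u_b)$ for $u_a<-w_i(t_0)<u_b$, which is all you use.
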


\begin{proof}
We observe that the assumptions of Theorem~\ref{th:Dmonot} are satisfied, hence there exists at least a $T$-periodic solution $v^*$ and, if the dynamics admits additional $T$-periodic solutions, they are of the form $\bar v(t)=v^*(t)+\delta$ for some constant $\delta$.
To show that $v^*$ is unique, we suppose by contradiction that there exists a second $T$-periodic solution $\bar v(t)=v^*(t)+\delta$, assuming without loss of generality $\delta>0$. 

We notice that  by  \ref{cond:drysmooth}  the function $G$ in \eqref{eq:gdiscr} is such that for every $t$ and for every $u_a,u_b$ satisfying $u_a<-w_i(t)<u_b$  we have  $G(t,u_a)>G(t,u_b)$.

Recalling that $w_i(t)=\dot z_i(t)$, so that $\int_0^T w_i(t)=0$, we consider the functions $\Gamma^{-w}_j$ as in Definition~\ref{def:gamma} with $\alpha_i=-w_i$.  Since $v^*$ is continuous, we must be in one of the following cases.
\begin{itemize}
	\item \emph{there exists a set $U\subset [0,T)$ with positive measure and an index $k\in\{1,\dots n\}$ such that $v^*(t)<\Gamma^{-w}_k(t)<\bar v(t)$ for every $t\in U$.} This implies $G(t,v^*(t))>G(t,\bar v(t))$ for $t\in U$, and therefore $\dot v^*>\dot{\bar v}$ almost everywhere on $U$. On the other hand, by $\bar v(t)=v^*(t)+\delta$ we deduce that $\dot v^*=\dot{\bar v}$ almost everywhere, obtaining a contradiction.
	\item \emph{there exists an index $k\in\{1,\dots n-1\}$ such that $\Gamma^{-w}_k(t)\leq v^*(t)<\bar v=v^*(t)+\delta\leq \Gamma^{-w}_{k+1}(t)$ for every $t$.} This is in contradiction with the final proposition in Lemma~\ref{lemma:gamma}.
	\item \emph{$v^*(t)<\bar v(t)\leq \Gamma^{-w}_1(t)$ for almost every $t$.} Then by \ref{cond:drysmooth} we deduce that $\dot v^*(t)\in G(t,v^*(t))>0$ for almost every $t$, which is a contradiction with $v^*$ being absolutely continuous and $T$-periodic.
	\item \emph{$\bar v(t)>v^*(t)\geq\Gamma^{-w}_n(t)$ for almost every $t$.} Then by \ref{cond:drysmooth} we deduce that $\dot {\bar v}(t)\in G(t,v^*(t))<0$ for almost every $t$, which is a contradiction with $\bar v$ being absolutely continuous and $T$-periodic.
\end{itemize}
Since in each case the existence of a second $T$-periodic solution $\bar v=v^*+\delta$ leads to a contradiction, we deduce that $v^*$ is the unique $T$-periodic solutions of \eqref{eq:gdiscr} and therefore a global attractor for the dynamics in the sense of \eqref{eq:Dglobattr}.
\end{proof}

\section{Continuous models of crawler} \label{sec:cont}
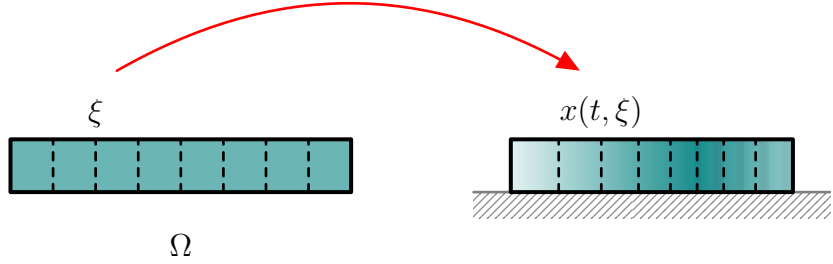
\begin{figure}
    \centering
    	\begin{tikzpicture}[line cap=round,line join=round,>=triangle 45,x=1cm,y=1cm, scale=0.7, line width=1pt]\large
		\clip(-10,-2) rectangle (6,4.5);
		\draw[line width=1pt,gray] (-0.7,-0)--(6,-0);
		\fill [pattern = north east lines, pattern color=gray] (-0.7,0) rectangle (6,-0.5);
		\fill [left color=teal!10!white, right color=teal!90!white] (0,0)--(3.5,0)--(3.5,1)--(0,1)--cycle;
		\fill [left color=teal!90!white, right color=teal!42!white,] (3.5,0)--(5.3,0)--(5.3,1)--(3.5,1)--cycle;
		\draw [line width=1.5pt] (0,0)--(5.3,0)--(5.3,1)--(0,1)--cycle;
		\draw[dashed,] (0.9,0)--(0.9,1);
		\draw[dashed] (1.7,0)--(1.7,1) node[above]{$x(t,\xi)$};
		\draw[dashed] (2.4,0)--(2.4,1);
		\draw[dashed] (3,0)--(3,1);
		\draw[dashed] (3.5,0)--(3.5,1);
		\draw[dashed] (4,0)--(4,1);
		\draw[dashed] (4.6,0)--(4.6,1);
				\draw [fill=teal!58!white,line width=1.5pt] (-3,0)--(-9.4,0)--(-9.4,1)--(-3,1)--cycle;
		\draw[dashed] (-3.8,0)--(-3.8,1);
		\draw[dashed] (-4.6,0)--(-4.6,1);
		\draw[dashed] (-5.4,0)--(-5.4,1);
		\draw[dashed] (-6.2,0)--(-6.2,1);
		\draw[dashed] (-7,0)--(-7,1);
		\draw[dashed] (-7.8,0)--(-7.8,1) node[above]{$\xi$};
		\draw[dashed] (-8.6,0)--(-8.6,1);
		\draw [->,red] (-7.4,2.3) to [out=30,in=150] (1.3,2.3);
		\draw (-6.2,-1) node{$\Omega$};
	\end{tikzpicture}
    \caption{The continuous model of crawler studied in Section~\ref{sec:cont}.}
    \label{fig:cont}
\end{figure}

Let us now consider that the body of our crawler is, in the reference configuration, the interval $\Omega=[\xi_a,\xi_b]$, so that its position  at time $t$  is described by a function $x(t,\xi)\colon [0,T] \times\Omega \to \R$, cf.~Figure~\ref{fig:cont}. Similar models have been considered, for instance, in \cite{BPZZ16}, and, at the quasistatic regime, in \cite{Ago,DeSGidNos,DeSTat12,Gid18}.  We assume that the crawler has a total mass $M>0$ distributed according to a non-negative mass density $\rho\in L^\infty(\Omega,[0,+\infty))$, and define $\bar x(t):=\int_\Omega x(t,\xi)\dd \rho(\xi)$ as the position of the barycentre.   Analogously to the discrete case, the shape of the crawler is prescribed. To do so, we prescribe the deformation gradient $\nabla_\xi x(t,\xi)=\phi(t,\xi)$ where

\begin{enumerate}[label=\textup{(C\arabic*)}]
	\item \label{cond:Czreg} $\phi\in W^{1,\infty}(\R,L^\infty(\Omega,\R))$ is $T$-periodic in $t$; moreover  we assume that there exist $\phi_\mathrm{max},\phi_\mathrm{min}$ such that $0<\phi_\mathrm{min}\leq\phi(t,\xi)\leq \phi_\mathrm{max}$ for almost every $(t,\xi)\in \R\times \Omega$. 
\end{enumerate}	
	As in the discrete case, we we denote the relative position of an element $\hat \xi\in\Omega$ with respect to the barycentre as $z(t,\hat \xi)=x(t,\hat \xi)-\bar x(t)$. We observe that
\begin{equation}
z(t,\hat \xi)=x(t,\hat \xi)-\bar x(t)=\frac{1}{M}\int_\Omega \rho(\xi)(x(t,\hat \xi)-x(t,\xi))\dd \xi=\frac{1}{M}\int_\Omega \rho(\xi)\int_\xi^{\hat \xi}\phi(t,s)\dd s\dd \xi
\end{equation}
	so that $z\in W^{1,\infty}(\R,W^{1,\infty}(\Omega,\R))$. 
	 We also notice that $\int_\Omega z(t,\xi)\dd \rho(\xi)=0$.
	 
	 The crawler is subject to a time-dependent friction force per unit length $f(t,\xi, \dot x_i)$ acting along the body. We assume that $t\mapsto f(t,\xi,\eta)$ is $T$-periodic. The possible types of friction force-velocity law will be discussed below, analogously to the discrete case. 
	 
	 As in the discrete case, we also allow an external force $B(t)$ acting on the system, which, without loss of generality, is considered as applied to the barycentre. We assume
	 \begin{enumerate}[label=\textup{(C\arabic*)},resume]
	 	\item \label{cond:CBreg} the function  $B(t)$ is $T$-periodic, bounded and measurable.
	 \end{enumerate}

The equation of motion reads
\begin{equation}
	M\ddot {\bar x} -B(t) \in\int_\Omega f\bigl(t,\xi,\dot x(t,\xi)\bigr)\dd \xi=
	\int_\Omega f\bigl(t,\xi,\dot{\bar x}(t) +\dot z(t,\xi)\bigr)\dd \xi \,.
\end{equation}

Recalling that $z$ is assigned, so that the problem has to be solved only for $\bar x$, we proceed as in the previous section and set $v(t)=\dot{\bar x}(t)$, $w(t,\xi)=\dot z(t,\xi)$ and obtain
\begin{equation}\label{eq:Cgen}
	\dot v(t) \in G(t,v):=\frac{1}{M}\left( B(t)+\int_\Omega f\bigl(t,\xi,v(t) +w(t,\xi)\bigr)\dd \xi \right) \,.
\end{equation}
As in discrete case, we study limit cycles for \eqref{eq:Cgen}, since its periodic solutions correspond to relative-periodic evolutions of the state $x$. 

\paragraph{Generic friction force (possibly non-monotone)}
We begin by discussing the situation of a generic friction force.
For simplicity, we will consider friction forces possibly multivalued only in zero, which includes all the physically relevant cases. More precisely, we consider friction density $f$ of the form
\begin{equation}
	f(t,\xi,u)=h(t,\xi,u)+q(t,\xi,u)
\end{equation}
where $h$ and $q$ are as follows: 
	\begin{enumerate}[label=\textup{(C\arabic*)},resume]
	\item \label{cond:Cjump}
	the function $h\colon \R \times \Omega \times \R\to\PP(\R)$ is of the form
		\begin{equation}
		h(t,\xi,u):=\begin{cases}
			\{-\mu^+(t,\xi)\} & \text{for $u>0$}\\
			\{\mu^-(t,\xi)\} & \text{for $u<0$}\\
			[-\mu^+(t,\xi),\mu^-(t,\xi)] & \text{for $u=0$}
		\end{cases}
	\end{equation}
	where the functions $\mu^\pm\in L^\infty(\R,L^\infty(\Omega,[0,+\infty))$ are $T$-periodic in $t$.
\item \label{cond:Csmooth} the function $q=q(t,\xi,u)\colon \R\times \Omega \times \R$ satisfies
\begin{itemize}
\item  $q(t,\xi, \cdot)$ is locally Lipschitz continuous uniformly in $(t,\xi)$;
\item for every $u\in \R$, $q(\cdot,\cdot,u)$ is measurable, bounded and $T$-periodic in $t$.
\end{itemize}
\item \label{cond:Ccoer} 
There exist $R>0$ and  two  $T$-periodic measurable functions $\ell^\pm :\R\to\R$ such that, for every $t\in \R$ 
	\begin{align*}
	\int_\Omega f(t,\xi,u) \dd\xi  &\geq -\ell^-(t) &&\text{for every  $u\leq -R$}\\
\int_\Omega f(t,\xi,u)\dd\xi	&\leq \ell^+(t) &&\text{for every  $u\geq R$}
	\end{align*} 
with
	\begin{equation*}
\int_{0}^T \left(B(s)-\ell^+(s)\right)\dd s<0< \int_{0}^T \left(B(s)+\ell^-(s)\right)\dd s \,.
	\end{equation*} 
\end{enumerate}
For such forces we have immediately the following result:

\begin{theorem}\label{th:Cgen}
	Let us consider a continuous model of crawler as above, satisfying \ref{cond:Czreg}, \ref{cond:CBreg}, \ref{cond:Cjump}, \ref{cond:Csmooth} and \ref{cond:Ccoer}. Then the same conclusions of Theorem \ref{th:Dgen} hold.
\end{theorem}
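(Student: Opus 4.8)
The plan is to proceed exactly as for Theorem~\ref{th:Dgen}: I would show that the multifunction $G$ on the right-hand side of \eqref{eq:Cgen} is of the form \eqref{eq:Gform} and satisfies the structural assumptions \ref{cond:S1}--\ref{cond:S2}. Once this is established, the whole apparatus of Section~\ref{sec:abstr} applies, and the conclusions of Theorem~\ref{th:Dgen} follow verbatim: the asymptotic $T$-periodicity of every solution is Theorem~\ref{th:asymptper}, while the two (possibly coincident) $T$-periodic solutions $v_\alpha,v_\beta$ and the limit $\dist(v(t),[v_\alpha(t),v_\beta(t)])\to0$ come from Theorem~\ref{th:Abs_gen} (the fixed points $\alpha,\beta$ of $\Phi_T$ produce $v_\alpha,v_\beta$, and monotonicity of the flow forces every periodic limit $v^*$ with $v^*(0)\in[\alpha,\beta]$ to satisfy $v_\alpha(t)\le v^*(t)\le v_\beta(t)$ for $t\ge0$). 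So the genuine content of the proof is the two verifications below.

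For \ref{cond:S1} I would write $G=-\mathcal A+p$ with
\[
\mathcal A(t,v):=-\frac1M\int_\Omega h\bigl(t,\xi,v+w(t,\xi)\bigr)\dd\xi,\qquad
p(t,v):=\frac1M\Bigl(B(t)+\int_\Omega q\bigl(t,\xi,v+w(t,\xi)\bigr)\dd\xi\Bigr).
\]
The single-valued term $p$ is routine: by \ref{cond:CBreg}, \ref{cond:Csmooth} and $z\in W^{1,\infty}$ (so that $w=\dot z\in L^\infty$ is $T$-periodic), the integrand is Carathéodory in $(t,v)$, and the local Lipschitz continuity of $q(t,\xi,\cdot)$ uniform in $(t,\xi)$ together with $\norm{w}_{L^\infty}<\infty$ and $\abs\Omega<\infty$ makes $p$ a Carathéodory function, $T$-periodic in $t$, locally Lipschitz in $v$ uniformly in $t$; measurability in $t$ is Fubini. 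The term $\mathcal A$ is the delicate one: it is an Aumann-type integral of the family $h(t,\xi,\cdot)$, which by \ref{cond:Cjump} is $u$-maximal-monotone, and one must check that the integral is again maximal monotone in $v$. I would obtain this by noticing that $-h(t,\xi,\cdot)=\partial_u V(t,\xi,\cdot)$ with $V(t,\xi,u):=\mu^+(t,\xi)\max\{u,0\}+\mu^-(t,\xi)\max\{-u,0\}$ convex in $u$, so that $\mathcal A(t,\cdot)$ is the subdifferential of the finite convex function $v\mapsto\frac1M\int_\Omega V(t,\xi,v+w(t,\xi))\dd\xi$; hence $\mathcal A(t,\cdot)$ is nonempty-, convex- and compact-valued, maximal monotone increasing, upper semicontinuous, and single-valued off an at most countable set of $v$. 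Local boundedness follows from $\norm{\mathcal A(t,v)}\le\frac{\abs\Omega}{M}(\norm{\mu^+}_{L^\infty}+\norm{\mu^-}_{L^\infty})$, $T$-periodicity is clear, and measurability of $\mathcal A(\cdot,v)$ follows by Fubini once $\mathcal A(t,v)$ is written through the $\rho$-measures of the sets where $v+w(t,\cdot)$ is positive, zero, or negative.

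For \ref{cond:S2} I would set $R':=R+\norm{w}_{L^\infty}$ (with $R,\ell^\pm$ from \ref{cond:Ccoer}); then for $v\ge R'$ one has $v+w(t,\xi)\ge R$ for a.e.\ $(t,\xi)$, so \ref{cond:Ccoer} gives $\int_\Omega f(t,\xi,v+w(t,\xi))\dd\xi\le\ell^+(t)$ and hence $G(t,v)\le\ell_d^+(t)$ for a suitable $T$-periodic measurable $\ell_d^+$ built from $B$ and $\ell^+$; the bound $G(t,v)\ge-\ell_d^-(t)$ for $v\le-R'$ is symmetric, and the coercivity inequalities in \ref{cond:Ccoer} are precisely what yields $\int_0^T\ell_d^\pm(s)\dd s<0$. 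The main obstacle is exactly the two points just flagged. First, the maximal monotonicity of the integral $\mathcal A$: in the discrete case the corresponding object is a finite sum of shifted monotone maps, which is trivially maximal monotone in dimension one, whereas here one genuinely needs the convex-potential representation, i.e.\ that the subdifferential commutes with the integral for this piecewise-linear $V$. Second, the reduction of \ref{cond:Ccoer} — which as stated bounds $\int_\Omega f(t,\xi,u)\dd\xi$ for a constant argument $u$ — to the form \ref{cond:S2} needed for the $\xi$-dependent argument $v+w(t,\xi)$; this rests on $w=\dot z$ being essentially bounded and on the fact that, as soon as all velocities $v+w(t,\xi)$ share the same sign, the set-valued part $h$ contributes the profile-independent quantity $\mp\frac1M\int_\Omega\mu^{\pm}(t,\xi)\dd\xi$. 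Everything else — existence, forward uniqueness, continuous dependence, boundedness of orbits and the structure of the attractor — is then inherited from Section~\ref{sec:abstr} exactly as in the discrete case.
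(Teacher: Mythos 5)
Your proposal is correct and follows essentially the same route as the paper's proof: verify \ref{cond:S1} by representing the set-valued part as the subdifferential of the convex potential $v\mapsto\frac1M\int_\Omega V(t,\xi,v+w(t,\xi))\dd\xi$ (the paper phrases this via $H(t,\xi,v)=\int_0^v h(t,\xi,u)\dd u$, i.e.\ $V=-H$), verify \ref{cond:S2} with the shift $R+\norm{w}_{L^\infty}$ exactly as in the discrete case, and then invoke Theorem~\ref{th:asymptper} (together with Theorem~\ref{th:Abs_gen}) for the conclusions of Theorem~\ref{th:Dgen}. Only a cosmetic slip: the sign-set decomposition used for measurability in $t$ should involve the $\mu^\pm$-weighted Lebesgue integrals $\dd\xi$ over $\{v+w(t,\cdot)\gtrless 0\}$, not $\rho$-measures, since $\rho$ enters only through the inertia.
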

\begin{proof}
    The theorem follows from 
    Theorem~\ref{th:asymptper}.
    In particular, we notice that the maximal monotonicity required by \ref{cond:S1} follows by the convexity of $u\to \int_\Omega H(t,\xi,u+w(t,\xi))\dd\xi$, where $H(t,\xi,v)=\int_0^v h(t,\xi,u)\dd u$.
\end{proof}

\paragraph{Strictly monotone friction}
In the case of strictly monotone friction forces  we also have a result analogous to the discrete case:
\begin{theorem}\label{th:Cstrict}
	Let us consider a continuous model of crawler as above, satisfying \ref{cond:Czreg}, \ref{cond:CBreg}, \ref{cond:Cjump}, \ref{cond:Csmooth} and \ref{cond:Ccoer}. Assume moreover that $f(t,\xi,u)$ is strictly monotone decreasing in $u$ for every $t,\xi$. Then the same conclusions of Theorem \ref{th:Dstrict} hold.
\end{theorem}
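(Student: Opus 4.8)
The plan is to mirror the proof of Theorem~\ref{th:Dstrict} by verifying that the function $G$ in \eqref{eq:Cgen} satisfies the structural assumptions \ref{cond:S1} and \ref{cond:S2} together with strict monotonicity in $v$, so that Theorem~\ref{th:Abs_strmon} applies directly. The only genuinely new point with respect to the discrete case is that the friction density $f$ is now integrated over $\Omega$, so the maximal monotone part of $G$ arises as an integral of set-valued maps rather than a finite sum; this is precisely the issue already handled in the proof of Theorem~\ref{th:Cgen}, which we should reuse.

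First I would recall the decomposition $f=h+q$ from \ref{cond:Cjump}--\ref{cond:Csmooth} and write, accordingly,
\begin{equation*}
	G(t,v)=-\mathcal{A}(t,v)+p(t,v)\,,\qquad \mathcal{A}(t,v):=-\frac{1}{M}\int_\Omega h\bigl(t,\xi,v+w(t,\xi)\bigr)\dd\xi\,,
\end{equation*}
with $p(t,v):=\frac{1}{M}\bigl(B(t)+\int_\Omega q(t,\xi,v+w(t,\xi))\dd\xi\bigr)$. The term $p$ is a Carathéodory function, $T$-periodic in $t$ and locally Lipschitz in $v$ uniformly in $t$, by \ref{cond:Csmooth}, \ref{cond:CBreg} and the fact that $w\in L^\infty$ (from \ref{cond:Czreg}); so it meets the requirements of \ref{cond:S1}. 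For $\mathcal{A}$, following the proof of Theorem~\ref{th:Cgen}, I would note that $u\mapsto\int_\Omega H(t,\xi,u+w(t,\xi))\dd\xi$, with $H(t,\xi,\cdot)$ the primitive of $h(t,\xi,\cdot)$, is convex (each integrand being convex since $h(t,\xi,\cdot)$ is monotone increasing), hence its subdifferential $\mathcal{A}(t,\cdot)$ is maximal monotone increasing in $v$ for every $t$; measurability in $t$ and local boundedness follow from $\mu^\pm\in L^\infty$. This gives \ref{cond:S1}.

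Next, \ref{cond:S2} is obtained exactly as in Theorem~\ref{th:Dgen}: using \ref{cond:Ccoer} and the bound $\norm{w}_{L^\infty}<\infty$, for $v\geq R+\norm{w}_{L^\infty}$ one has $v+w(t,\xi)\geq R$ a.e., so $\int_\Omega f(t,\xi,v+w(t,\xi))\dd\xi\leq\ell^+(t)$, and one sets $\ell_d^+(t):=\frac{1}{M}(B(t)+\ell^+(t))$, which has negative integral over $[0,T]$; the lower bound is analogous. Finally, strict monotonicity of $f(t,\xi,\cdot)$ for every $t,\xi$ — the only added hypothesis — implies strict monotonicity of $v\mapsto\int_\Omega f(t,\xi,v+w(t,\xi))\dd\xi$, hence of $G(t,\cdot)$. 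Then Theorem~\ref{th:Abs_strmon} yields exactly one $T$-periodic solution $v^*$ which is a global attractor in the sense of \eqref{eq:globattr}, i.e. the conclusion of Theorem~\ref{th:Dstrict}.

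The main obstacle is the soft-analysis bookkeeping needed to justify that the integral of the set-valued maps $\xi\mapsto h(t,\xi,\cdot)$ is again maximal monotone — one must be slightly careful that differentiation under the integral sign and the identification $\partial\int=\int\partial$ are legitimate here, which is why routing this through the convexity argument of Theorem~\ref{th:Cgen} (rather than arguing directly on graphs) is the cleaner path; everything else is a verbatim adaptation of the discrete proofs.
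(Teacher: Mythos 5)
Your proposal is correct and follows essentially the same route as the paper: the paper disposes of this theorem by simply invoking Theorem~\ref{th:Abs_strmon}, with the verification of \ref{cond:S1}--\ref{cond:S2} (via the convexity argument for the integral of $h$, as in Theorem~\ref{th:Cgen}) left implicit, which is exactly what you spell out. Your added observation that strict monotonicity of $f(t,\xi,\cdot)$ passes to the integral defining $G(t,\cdot)$ is the only point the paper does not even state, and it is handled correctly.
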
 
\noindent The result is a straightforward consequence of Theorem~\ref{th:Abs_strmon}.

 \smallskip

\paragraph{Dry friction}
In the continuous case, differently from the discrete one, for dry friction we have uniqueness of the periodic solution also without requiring additional regularity in time  of the shape function $z(t,\xi)$, assuming positive friction coefficients.  We prove this in a generalized case for $q$ monotone, but requiring that the inequalities in \ref{cond:Ccoer} are satisfied also for $q\equiv 0$, meaning:
\begin{enumerate}[label=\textup{(C\arabic*)},resume]
\item \label{cond:Cdry}$q(t,\xi,u)$ is monotone decreasing in $u$ for every $t,\xi$, with $q(t,\xi,0)\equiv 0$  and
	\begin{equation*}
\int_{0}^T \left(B(s)-\int_\Omega \mu^+(t,\xi)\dd \xi \right)\dd s<0< \int_{0}^T \left(B(s)+\int_\Omega \mu^-(t,\xi)\dd \xi\right)\dd s
	\end{equation*}

\end{enumerate}
and assuming that
\begin{enumerate}[label=\textup{(C\arabic*)},resume]
\item \label{cond:Cpositive}  the functions $\mu^\pm(t,\xi)$  satisfy for almost every $t$ 
\begin{equation*}
\mu^+(t,\xi)+\mu^-(t,\xi)>0    
\end{equation*}
almost everywhere in $\Omega$.
\end{enumerate}
As showed by Examples \ref{ex:dry} and \ref{ex:drystar},  analogous assumptions  in  the discrete case do not guarantee the uniqueness of the $T$-periodic orbit in that framework,   but in the continuous  setting we have the following:
\begin{theorem}\label{th:Cdry}
	Let us consider a continuous model of crawler as above, satisfying \ref{cond:Czreg}, \ref{cond:CBreg}, \ref{cond:Cjump}, \ref{cond:Csmooth}, \ref{cond:Cdry} and \ref{cond:Cpositive}.
	Then the same conclusions of Theorem \ref{th:Dmonot_reg} hold.
\end{theorem}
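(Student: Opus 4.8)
The plan is to mirror the two-step structure of the proof of Theorem~\ref{th:Dmonot_reg}. As in the proof of Theorem~\ref{th:Cgen}, the map $G$ in \eqref{eq:Cgen} is of the form \eqref{eq:Gform} and satisfies the structural assumptions (with \ref{cond:S2} guaranteed by the coercivity inequalities built into \ref{cond:Cdry}); moreover, since $h(t,\xi,\cdot)$ is maximal monotone decreasing and $q(t,\xi,\cdot)$ is monotone decreasing by \ref{cond:Cdry}, the function $G(t,\cdot)$ is monotone decreasing in $v$. Hence Theorem~\ref{th:Abs_weakmon} applies: there is a $T$-periodic solution $v^*$, and every $T$-periodic solution of \eqref{eq:Cgen} equals $v^*+\delta$ for some constant $\delta$. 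It therefore remains to show that the existence of a second $T$-periodic solution $\bar v=v^*+\delta$, which we may assume to satisfy $\delta>0$, leads to a contradiction; then $v^*$ is unique and Corollary~\ref{cor:Abs_unique} yields that it is a global attractor in the sense of \eqref{eq:Dglobattr}.

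The core of the argument is to combine $\dot{\bar v}=\dot v^*$ a.e.\ with monotonicity. Since $v^*(t)<\bar v(t)$ and $G(t,\cdot)$ is monotone decreasing, every element of $G(t,\bar v(t))$ is $\le$ every element of $G(t,v^*(t))$; as the common value $\dot v^*(t)=\dot{\bar v}(t)$ lies in both sets for a.e.\ $t$, this forces $\dot v^*(t)=\min G(t,v^*(t))=\max G(t,\bar v(t))$ for a.e.\ $t$. Writing the extrema through the pointwise $\min$/$\max$ selections $\underline f,\overline f$ of the interval-valued $f$, the identity $\min G(t,v^*(t))=\max G(t,\bar v(t))$ becomes
\[
\int_\Omega\bigl[\underline f\bigl(t,\xi,v^*(t)+w(t,\xi)\bigr)-\overline f\bigl(t,\xi,\bar v(t)+w(t,\xi)\bigr)\bigr]\dd\xi=0 .
\]
Each integrand is nonnegative, again by monotonicity of $f(t,\xi,\cdot)$ since $v^*(t)+w(t,\xi)<\bar v(t)+w(t,\xi)$, and on the set where $v^*(t)+w(t,\xi)<0<\bar v(t)+w(t,\xi)$ it is at least $\mu^+(t,\xi)+\mu^-(t,\xi)$, which is strictly positive at a.e.\ $\xi$ by \ref{cond:Cpositive}. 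Hence, for a.e.\ $t$, the set $E_t:=\{\xi\in\Omega:\ -w(t,\xi)\in(v^*(t),v^*(t)+\delta)\}$ is Lebesgue-null.

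Next I would turn this into a statement on $v^*$ alone by a connectedness argument — the continuous replacement for the use of Lemma~\ref{lemma:gamma} in the discrete case. For a.e.\ $t$ the slice $w(t,\cdot)$ admits a continuous representative on the interval $\Omega$, since $w(t,\cdot)\in W^{1,\infty}(\Omega)$ for a.e.\ $t$ by the regularity in \ref{cond:Czreg}; thus the essential range of $w(t,\cdot)$ is an interval. Since $\abs{E_t}=0$, this range is disjoint from the open interval $(-v^*(t)-\delta,-v^*(t))$, hence lies entirely in $(-\infty,-v^*(t)-\delta]$ or entirely in $[-v^*(t),+\infty)$; that is, for a.e.\ $t$, either $v^*(t)+w(t,\xi)\ge0$ for a.e.\ $\xi$, or $v^*(t)+\delta+w(t,\xi)\le0$ for a.e.\ $\xi$. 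Recalling $\int_\Omega z(t,\xi)\dd\rho(\xi)\equiv0$, differentiation under the integral sign gives $\int_\Omega w(t,\xi)\dd\rho(\xi)=0$ for a.e.\ $t$; integrating the two inequalities against $\rho$ then yields $v^*(t)\ge0$ in the first case and $v^*(t)\le-\delta$ in the second. Since $v^*$ is continuous, the disjoint closed sets $\{v^*\ge0\}$ and $\{v^*\le-\delta\}$ cover $\R$ up to a null set, hence (their union being closed with open null complement) cover all of $\R$; by connectedness of $\R$ one of them is empty, so either $v^*\ge0$ everywhere or $v^*\le-\delta$ everywhere.

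Finally, each alternative contradicts \ref{cond:Cdry}. Suppose $v^*\ge0$ everywhere; then, for a.e.\ $t$, the first case of the dichotomy must hold, so $\bar v(t)+w(t,\xi)\ge\delta>0$ for a.e.\ $\xi$. There $h(t,\xi,\bar v(t)+w(t,\xi))=\{-\mu^+(t,\xi)\}$ and, since $q(t,\xi,\cdot)$ is nonincreasing with $q(t,\xi,0)=0$, $f(t,\xi,\bar v(t)+w(t,\xi))\le-\mu^+(t,\xi)$; as $G(t,\bar v(t))$ is then single-valued, $\dot{\bar v}(t)=\tfrac1M\bigl(B(t)+\int_\Omega f(t,\xi,\bar v(t)+w(t,\xi))\dd\xi\bigr)\le\tfrac1M\bigl(B(t)-\int_\Omega\mu^+(t,\xi)\dd\xi\bigr)$. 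Integrating over one period and using that $\bar v$ is $T$-periodic gives $0\le\tfrac1M\int_0^T\bigl(B(s)-\int_\Omega\mu^+(s,\xi)\dd\xi\bigr)\dd s<0$, a contradiction; the case $v^*\le-\delta$ is symmetric, using $\mu^-$ and the other inequality in \ref{cond:Cdry}. Hence $\delta=0$, so $v^*$ is the unique $T$-periodic solution. I expect the connectedness passage above to be the main obstacle: it is precisely where the continuous model, unlike the discrete one in Theorem~\ref{th:Dmonot_reg}, needs no additional time-regularity of the shape, the intermediate-value property on $\Omega$ replacing the order-statistics Lemma~\ref{lemma:gamma}; some care is also needed with the selection and monotonicity bookkeeping in the second step.
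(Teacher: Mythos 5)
Your argument is correct, and it reaches uniqueness by a noticeably different arrangement of the same ingredients as the paper. The first step coincides: structural assumptions plus monotonicity of $G(t,\cdot)$ give, via Theorem~\ref{th:Abs_weakmon}, that any second $T$-periodic solution is $\bar v=v^*+\delta$ with $\dot{\bar v}=\dot v^*$ a.e., so only $\delta>0$ must be excluded. From there the paper introduces the envelope functions $\zeta^\pm(t)=\min/\max_\xi\,(-\dot z(t,\xi))$, proves that $G(t,\cdot)$ is strictly decreasing ``across'' $[\zeta^-(t),\zeta^+(t)]$ (using continuity of $\dot z(t,\cdot)$ and \ref{cond:Cpositive}), and uses the coercivity in \ref{cond:Cdry} to show that every $T$-periodic solution must lie in $[\zeta^-(t),\zeta^+(t)]$ on a set of times of positive measure, which contradicts equality of the derivatives. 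You instead exploit the set-valued monotonicity directly: from $\dot v^*=\dot{\bar v}\in G(t,v^*)\cap G(t,\bar v)$ you get $\min G(t,v^*)=\max G(t,\bar v)$, rewrite this as a vanishing integral of a nonnegative integrand, and conclude via \ref{cond:Cpositive} that the gap $(v^*(t),\bar v(t))$ avoids the range of $-w(t,\cdot)$ for a.e.\ $t$; the intermediate-value property in $\xi$ (your substitute for Lemma~\ref{lemma:gamma}, playing the same role as the paper's $\zeta^\pm$), the zero $\rho$-average of $w$, and a connectedness-in-time argument then force $v^*\geq 0$ everywhere or $\bar v\leq 0$ everywhere, and the coercivity inequality in \ref{cond:Cdry} is used only at the very end, integrated along the periodic solution itself. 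Both proofs hinge on the same mechanism (zero weighted average of $\dot z$, continuity in $\xi$, $\mu^++\mu^->0$, coercivity), but your version localizes in space first and in time last, which sidesteps the delicate case where a periodic solution only touches the boundary of the envelope, whereas the paper's version isolates the strict-monotonicity statement \eqref{eq:Cdrystima1} as a reusable lemma. Two points you gloss, at roughly the same level of rigor as the paper: the identification of the extrema of the Aumann-type integral $G(t,v)$ with the integrals of the extreme selections $\underline f,\overline f$ (true here because $f$ is interval-valued with $L^\infty$ bounds), and the verification of \ref{cond:S2} from \ref{cond:Cdry}, which needs the essential boundedness of $w$ from \ref{cond:Czreg} (take $R>\norm{w}_{L^\infty}$ so that $h(t,\xi,u+w)$ is single-valued there), exactly as the paper does through $\zeta^\pm$.
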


\begin{proof}
By  \ref{cond:Czreg}, \ref{cond:CBreg}, \ref{cond:Cjump}, \ref{cond:Csmooth} and \ref{cond:Cdry}, $G$ satisfies \ref{cond:S1}.

	We observe that, for almost every $t$, $\dot z(t,\cdot)\in W^{1,\infty}(\Omega,\R)$. Thus, for almost every $t$, we can define  the $T$-periodic functions
	\begin{equation*}
		\zeta^-(t):=\min_{\xi\in \Omega} -\dot z(t,\xi)\leq0\leq\max_{\xi\in \Omega} -\dot z(t,\xi)=:\zeta^+(t) \,.
	\end{equation*}
	The inequalities follow from the fact that $\dot z(t,\cdot)$ has zero average weighted with the measure $\rho$. Note that $\zeta^\pm(\cdot)\in L^\infty[0,T]$.
	By \ref{cond:Cdry} we have that $G(t,\cdot)$   is decreasing in $u$ and verifies
			\begin{equation}\label{ineq-}
	G(t,u)\geq -l^-_d(t):=\frac{1}{M}\left(B(t)+\int_\Omega \mu^-(t,\xi) d\xi \right )\qquad \text{for $u\leq  \zeta^-(t)$}
	\end{equation}
	and
	\begin{equation}\label{ineq+}
	G(t,u)\leq l^+_d(t):=\frac{1}{M}\left(B(t)-\int_\Omega \mu^+(t,\xi) d\xi \right )\qquad \text{for $u\geq \zeta^+(t)$\,.}
	\end{equation}
 
In particular, this implies that \ref{cond:S2} holds for $R=\max \{\norm{\zeta^+}_\infty, \norm{\zeta^-}_\infty\}$.

	Since the structural assumptions hold, by  Theorem \ref{th:Abs_weakmon} we have that  the nonempty attractor of the dynamics of equation \eqref{eq:Cgen}   is made by periodic orbits with the same derivative.
	
	We show now  that  for almost every $t$ and every  $v\in[\zeta^-(t),\zeta^+(t)]$ we have	 
	 \begin{equation}\label{eq:Cdrystima1}
	 	G(t,u_1)>G(t,u_2) \qquad \text{for every $u_1<v<u_2$\,.}
	 \end{equation}
	We consider two cases:
\begin{itemize}	
\item	If $\zeta^-(t)<\zeta^+(t)$,   we begin by  proving that $G(t, \cdot)$ is strictly decreasing on $J=[\zeta^-(t),\zeta^+(t)]$. 
	Consider  $v\in J$  and $u_1<v<u_2$.  
	By the continuity of $\dot z(t,\cdot)$   there exists an interval $I\subset \Omega$ with positive measure such that $-\dot{z}(t,\xi)\in(\zeta^-(t),\zeta^+(t))\cap (v_1,v_2) $ for any $\xi\in I$. 	Then, by  \ref{cond:Cpositive}  we have 
\begin{equation}
\int_I h(t,\xi,u_1+\dot{z}(t,\xi))\dd\xi=\int_I\mu^+(t,\xi)\dd\xi>\int_I-\mu^-(t,\xi)\dd\xi=\int_I h(t,\xi,u_2+\dot{z}(t,\xi)) \dd\xi
\end{equation}
while by the mononoticity of $h$
\begin{equation*}
\int_{\Omega\setminus I} h(t,\xi,u_1+\dot{z}(t,\xi))\dd\xi \geq \int_{\Omega\setminus I} h(t,\xi,u_2+\dot{z}(t,\xi)) \dd\xi \,.
\end{equation*}
Since also $q$ is monotone decreasing in $u$ we get \eqref{eq:Cdrystima1}. 

\item If $\zeta^-(t)=\zeta^+(t)=0$,	 and hence $\dot{z}=0$,  we get \eqref{eq:Cdrystima1} is obtained directly, since it must be verified only for $v=0$ and we know by \ref{cond:Cjump}, \ref{cond:Cdry} and \ref{cond:Cpositive} that $f(t,\xi,u_1)<f(t,\xi,u_2)$ for $u_1<0<u_2$.
\end{itemize}

	\smallskip
	
  In order to prove the uniqueness of the $T$-periodic orbit of \eqref{eq:Cgen}, we observe first that, if $\alpha(t)$ is any $T$-periodic solution of \eqref{eq:Cgen}, then it cannot be neither 
  \begin{equation}\label{less}
      \alpha(t)< \zeta^-(t)\quad \text{almost everywhere on $[0,T]$}   
  \end{equation}
  nor
	\begin{equation}\label{more}
      \alpha(t)> \zeta^+(t)\quad \text{almost everywhere on $[0,T]$.}   
  \end{equation}
	In fact, if \eqref{less} holds, we get a contradiction since
	\begin{equation*}
	    0=\int_0^T\dot{\alpha}(t) dt=\int_0^T G(t,\alpha(t)) dt\geq \int_0^T -l_d^-(t) dt>0 
	\end{equation*}
	Inequality \eqref{more} is ruled out analogously.
	
	We conclude that, given any $T$-periodic solution $\alpha(t)$ of  \eqref{eq:Cgen}, it must be
	\begin{equation}\label{middle}
	    \alpha(t)\in [\zeta^-(t),\zeta^+(t)] \qquad \text{on a subset of positive measure of $[0,T]$.}
	\end{equation}
	Assume now by contradiction that there exist two distinct $T$-periodic solutions $v^*$ and  $\bar v$  of \eqref{eq:Cgen}  with $v^*(t)<\bar v(t)$ on $[0,T]$.

	By the first part of the proof, we know that  $\dot v^*(t)=\dot {\bar v}(t)$ almost everywhere on $[0,T]$. But then,  by  \eqref{middle}   and \eqref{eq:Cdrystima1} we have 
	$$\dot{v}^*(t)\in G(t,v^*(t))>G(t, \bar v(t))\ni \dot{\bar v}(t)$$
	on a set of positive measure in $[0,T]$, a contradiction. 
	Our proof is concluded.
\end{proof}
 We remark that assumption \ref{cond:Cpositive} is necessary for uniqueness, as we show in the following example where we reconstruct the discrete model of Example~\ref{ex:dry} in a continuous setting.
 
 \begin{example}
Let us set $\Omega=[0,3]$, $\rho\equiv 1/3$, $B\equiv 0$, $q\equiv 0$ and
\begin{equation*}
\mu^\pm(t,x)=\begin{cases}
0 & \text{if $\zeta\in[1,2]$}\\
1 & \text{if $\zeta\in[0,1)\cup(2,3]$}
\end{cases}
\qquad
\dot \phi(t,x)=\begin{cases}
2 w_2(t)& \text{if $\zeta\in[1,2]$}\\
0 & \text{if $\zeta\in[0,1)\cup(2,3]$}
\end{cases}
\end{equation*}
where $w_2(t)$ is the same as in Example~\ref{ex:dry}, with $\phi(0,\cdot)\equiv 1$ and $\alpha>0$. Then we obtain the same function $G$ of  Example~\ref{ex:dry}, so that the asymptotic average velocity is not unique.
 \end{example}

\section{Discussion}
\label{sec:disc}

In this work we have studied the asymptotic gait of discrete as well as continuous crawlers with prescribed shape  moving on a line  subject to different classes of frictions and possibly to an additional external forcing, such as that due to gravity when crawling on a slope.

Our aim was to make evident how a well-posed notion of gait as asymptotic behaviour, although it can be reasonably expected, requires some special care in the mathematical formulation of the model.
For instance, triangular waves of shape change, as in Example~\ref{ex:dry}, might be a convenient choice to describe (or prescribe) the actuation in a device, or might be expected (or considered admissible) in a control problem, since they are the integral of a bang-bang strategy. We showed that this can be a problematic assumption in the case of dry friction (Examples~\ref{ex:dry} and \ref{ex:drystar}), which could however be overcome by adding a small additional viscosity to the model (Theorem~\ref{th:Dstrict}), or by assuming only smooth inputs (Theorem~\ref{th:Dmonot_reg}). Continuous models with dry friction also avoid this problem (Theorem~\ref{th:Cdry}).
We notice, at the same time, that a strong simplification of the friction forces is also not advisable, since viscosity alone is still not sufficient to proper locomotion also in a dynamic context (Example~\ref{ex:incomp}).

To better illustrate our contribution to the topic and possible future developments, we now make a brief comparison with the related results in \cite{FigKny,EldJac,ColGidVil}.
In \cite{FigKny}, the asymptotic behaviour of the discrete locomotor of Section~\ref{sec:discr} has been studied in the special case of continuous, autonomous, single-valued and strictly monotone friction forces, for a $\CC^1$ shape change, in the absence of external forces and for a more restrictive dissipative condition than \ref{cond:Dcoer}. We generalized all these assumptions and provided various examples on when they become sharp for the uniqueness of the limit behaviour. In particular, our results on dry friction show that monotonicity alone is not sufficient to prove sharp results on the uniqueness of the limit cycle, but the intrinsic structure of a crawling model (compared to a general periodically forced system) must also be exploited. 

In such framework, \cite{FigKny} also discusses the rate of convergence of the system to the limit cycle characterizing the gait. We did not address this in our work, but the topic is certainly worth of further investigation.
Indeed, we observe that, in the case of set-valued forces, and in particular of dry friction, the behaviour becomes more complex: we can observe finite-time convergence, possibly coexisting with asymptotic-only convergence for a different initial state (see e.g.~Example~\ref{ex:strib}). Such a coexistence in related problems with set-valued dissipation forces has been studied also in \cite{Cabot,ColGidVil,GudMak}. 

Two further directions of investigation are suggested by \cite{EldJac}, which studies the locomotion on the plane of a tetrahedral crawler with elastic links. All friction forces in the model are viscous. The paper proved the stability of relative periodic solutions for a small actuation with perturbative methods starting from the (stable) steady state of a passive (i.e. non-actuated) crawler. Both a more complex geometry of the crawler and an elastic body imply that the asymptotic stabilization cannot be reduced anymore to a scalar problem, but has to be studied in a higher dimensional setting, where we expect a more complex phenomenology.

To make a comparison in the case of dry friction, which is the most valuable contribution in our work, a meaningful perspective comes from \cite{ColGidVil}, where  the analogous of our discrete models of Section~\ref{sec:discr} were studied assuming an elastic body, but only for dry friction at a quasistatic regime.  At the quasistatic regime, an additional balance-breaking condition on the friction coefficients is necessary for the uniqueness of solution of the initial value problems (notice that the uniqueness argument of \cite[Theorem 2.2]{Gid18} and \cite[Example~3.2]{Gid18} can be straightforwardly adapted also to the case of prescribed shape). Yet, once uniqueness of solution is provided, uniqueness of the asymptotic average velocity of the crawler follows \cite[Theorem~11]{ColGidVil}. In this work, we show instead that, at a dynamic regime, uniqueness of solutions always holds, but uniqueness of the asymptotic average velocity for a gait is true only for smooth inputs (Theorem~\ref{th:Dmonot_reg}).

We conclude highlighting, once more, how an intuitive concept such as the characterization of a gait as an asymptotic behaviour poses theoretical issues and open problems, even in simple models. We hope that our work will clarify this picture, encourage further investigations on the topic and support the study of more complex issues, such as optimal control, which, as explained in the introduction, must be based on a well-posedness of the asymptotic behaviour.

\paragraph{Acknowledgements.} P.G. has been partially supported by the GA\v{C}R Junior Star Grant 21-09732M.
 A.M. was  supported  by FCT project  UIDB/04561/2020.
 C.R. was supported by FCT projects  UIDB/04621/2020 and UIDP/04621/2020 of CEMAT at FC-Universidade de Lisboa.


\footnotesize

\end{document}